\documentclass[journal,twoside,web]{ieeecolor}

\usepackage{generic}
\usepackage{cite}
\usepackage{amsmath,amssymb,amsfonts}
\usepackage{algorithmic}
\usepackage{graphicx}
\usepackage{algorithm,algorithmic}
\usepackage{hyperref}
\let\labelindent\relax
\usepackage{enumitem}
\usepackage{tikz}
\usepackage{textcomp}
\hypersetup{hidelinks=true}

\def\BibTeX{{\rm B\kern-.05em{\sc i\kern-.025em b}\kern-.08em
    T\kern-.1667em\lower.7ex\hbox{E}\kern-.125emX}}
\usetikzlibrary{positioning}
\usetikzlibrary{shapes.geometric, arrows}
\usetikzlibrary{arrows.meta}

\tikzstyle{startstop} = [rectangle, rounded corners, minimum width=3cm, minimum height=1cm, text centered, draw=black, fill=red!20]

\tikzstyle{process} = [rectangle, minimum width=3cm, minimum height=1cm, text centered, draw=black, fill=blue!20]

\tikzstyle{decision} = [diamond, aspect=2, text width=3cm, text centered, draw=black, fill=green!20]

\tikzstyle{arrow} = [thick, ->, >=stealth]

\newtheorem{proposition}{Proposition}
\newtheorem{theorem}{Theorem}
\newtheorem{corollary}{Corollary}
\newtheorem{lemma}{Lemma}

\begin{document}
%

\title{Minimizing the Expected Cost of Synchronization in Lossless Power Networks}

\author{Gerald~Ogbonna,~\IEEEmembership{Student Member,}
        David~Bindel,~\IEEEmembership{Senior Member,}
        C. L.~Anderson,~\IEEEmembership{Senior Member}
\thanks{G. Ogbonna is with the Department of Systems Engineering, Cornell University, Ithaca, NY, 14853 USA (e-mail: gco27@cornell.edu).}
\thanks{C. L. Anderson is with the Department of Biological and Environmental Engineering, Systems Engineering, and the Center for Applied Mathematics, Cornell University, Ithaca, NY, 14853 USA (email: cla28@cornell.edu).}
\thanks{David Bindel is with the Department of Computer Science, Mathematics, and the Center for Applied Mathematics, Cornell University, Ithaca, NY, 14853 USA (email: bindel@cornell.edu).}
\thanks{Authors would like to thank the Cornell Atkinson Center for Sustainability and the National Science Foundation for support of this work.\\ Manuscript received XX, 2025; revised XX.}
}

\maketitle

\begin{abstract}
The reliable operation of large-scale electric power networks is increasingly challenging, particularly with the integration of stochastic renewable generation. In this work, we address the problem of minimizing network transients (cost of synchronization) by optimally modifying the underlying network. We formulate the problem in terms of graph Laplacian matrices and show that, under typical assumptions, the problem is convex. We derive a linear matrix inequality whose feasibility guarantees the existence and uniqueness of phase cohesive steady-state angles; this condition can be directly incorporated as a convex constraint in the optimization framework and we provide geometric interpretations of the problem. The proposed method is validated on the IEEE 30-bus test system, where results demonstrate that our approach effectively identifies critical links on the network. Simulations of the network dynamics show a substantial reduction in network transients and overall improvements across several performance metrics.

\end{abstract}

\begin{IEEEkeywords}
Coupled oscillators, electrical power network, graph theory, optimization, synchronization.
\end{IEEEkeywords}

%
\IEEEpeerreviewmaketitle

\section{Introduction}
\IEEEPARstart{T}{he} electric power system is currently undergoing rapid transformation with the increasing integration of renewable generation. It is well known that reliable power system operation depends strongly on the synchronization of multiple oscillatory state variables to a nominal grid frequency \cite{paganini_global_2020, chiang_direct_2011}. The stochasticity of renewables poses new challenges to the ability of the power network to remain synchronized and return to synchrony following a perturbation from steady-state. \cite{ersdal_model_2014} showed that the number of frequency incidents per month on the Nordic system is strongly correlated with the share of generation from renewables, highlighting the need for fast acting controllers and the design of resilient/robust networks. Designing resilient networks provides several benefits to system operations including ensuring that the system returns to synchrony following a disturbance with minimal control effort (i.e. reduce control burden on local controllers), improved reference tracking, reduce equipment damage due to large deviations in the system frequency, and prevent cascaded failures.

Previous studies \cite{paganini_global_2017, guo_graph_2018, mallada_improving_2011}, have considered the role of network connectivity on the transient stability of power networks. In \cite{paganini_global_2017, klein_fundamental_1991}, the authors observe that geographically distributed frequency oscillations (inter-area oscillations) are often the result of weak global coupling. Similarly, Low \textit{et al.} \cite{guo_critical_2017} suggest that increasing network connectivity increases system robustness against disturbances of relatively low frequency. Overall, there is consensus in the literature that improving network connectivity enhances system robustness to disturbances. This observation motivates the present work.

The problem of designing robust networks has been a focus of recent research efforts \cite{wang_improving_2014, nagpal_designing_2022, abbas_robust_2012}. To this end, several measures of network robustness \cite {freitas_graph_2022, siami_schur-convex_2014} have been defined in the literature for different classes of dynamical systems on graphs. For networks in which all nodes have identical second-order Kuramoto oscillators, Tyloo \textit{et al.} \cite{tyloo_global_2019} provide robustness measures defined in terms of the generalized effective resistance of the underlying graph. Specifically, they show a strong correlation between their measures and the $\mathcal{L}_{2}$ norm of the phase and frequency trajectories of the network dynamics for perturbations following an Ornstein-Uhlenbeck process. This measure is minimized in the work by Nagpal \textit{et al.} \cite{nagpal_designing_2022}.

The use of effective resistance as a measure of network connectivity (graph connectivity/network robustness) dates to early works by Klein and Randic \cite{klein_resistance_nodate}. Ghosh \textit{et al.} \cite{ghosh_minimizing_2008} provide a more recent review of various applications of effective resistance to graph theoretic problems. Applications of effective resistance in network analysis include growing linear consensus networks \cite{siami_tractable_2016}, community detection\cite{pizzuti_effective_2021}, power redispatch and network synchronization \cite{fazlyab_optimal_2017}, improving robustness under random time-varying topological network changes \cite{dai_optimal_2011}, and designing fast mixing Markov processes\cite{sun_fastest_2006}.

In this paper, we investigate the problem of designing robust power networks with the aim of minimizing transients by optimizing the connectivity of the transmission system. We model the power grid using a coupled-oscillator framework connected over an undirected graph and formulate optimization problems to improve the network's dynamic response to disturbances. The changes to the transmission network proposed in this work may be implemented through flexible AC transmission systems (FACTS) devices \cite{hingorani_flexible_1993, gotham_power_1998}, which allow the electrical characteristics of transmission lines to be modified, or through strategic expansions of certain transmission corridors. 


The main contributions of this work are as follows:
\begin{itemize}
    \item We extend the results in \cite{paganini_global_2020} to show that the expected $\mathcal{L}_2$ norm of the transient component of the angular frequencies near a frequency-synchronized solution is a function of the \textit{total effective resistance} of the \textit{Kron-reduced graph}.
    \item We show that under certain assumptions, optimizing the network susceptances to minimize the expected deviation from synchrony for a network of homogeneous second-order oscillators is a convex problem and admits a semidefinite programming (SDP) formulation, enabling efficient numerical solutions.
    \item We provide a sufficient condition for steady-state phase angle cohesion and a linear matrix inequality (LMI) whose feasibility ensures the existence and uniqueness of synchronized solutions that meet prescribed steady-state phase angle requirements for a set of net power injections.
    \item We test our methods on the IEEE 30-bus network and show that the method identifies the critical links for synchronization. Simulations of the network dynamics also show that the optimized networks have significantly improved transient response.
\end{itemize}

The rest of the paper is organized as follows: Section II introduces notations, Kron reduction and effective resistance, and the power system model. In Section III we derive a measure of the dynamic performance of the network in terms of effective resistance. In Section IV, we formulate several convex optimization problems that minimize the defined measure and incorporate synchronization constraints. Section V presents numerical results on the test system, and Section VI concludes the paper.


\section{Preliminaries}
\subsection{Notation}
\label{subsec:notation}
Given a matrix $A \in \mathbb{R}^{n \times n}$, the notation $A \succeq 0$ ($A \succ 0$) denotes that the matrix $A$ is symmetric positive semidefinite (positive definite). Similarly, given matrices $A, B \in \mathbb{R}^{n \times n}$, the notation $A \succeq B$ implies $A - B \succeq 0$. For vectors $x, y \in \mathbb{R}^n$, the notation $x \succeq y$ denotes an elementwise partial ordering,
that is $x_i \geq y_{i}$ for each $i$. $e_i$ denotes the $i$th canonical basis vector in $\mathbb{R}^n$ and we use $\mathbf{1}$ and $\mathbf{0}$ to denote the $n$-dimensional vectors of all ones and all zeros, respectively. $v_i$ or $[v]_i$ to denotes the $i$th entry of the vector $v$.

We denote by $\mathcal{G} = (\mathcal{V}, \mathcal{E}, A)$ an undirected graph with vertex set $\mathcal{V} = \{1, \ldots, n\}$ of cardinality $|\mathcal{V}| = n$, $\mathcal{E} \subseteq \mathcal{V} \times \mathcal{V}$ denotes the edge set of cardinality $|\mathcal{E}| = m$, and $A = A^{\intercal} \in \mathbb{R}^{n \times n}$ is the corresponding weighted adjacency matrix. We denote the edge weight of the undirected edge $\{i,j\} \in \mathcal{E}$, by $a_{ij} \geq 0$. $L$ is the Laplacian matrix defined as $L = D - A$, $D$ is the diagonal degree matrix with $D_{ii} = \sum_{j}a_{ij}$. $\text{Spec}(L)$ denotes the multi-set of eigenvalues of $L$. In this work, we assume that the graph $\mathcal{G}$ is connected which implies that the algebraic multiplicity of the eigenvalue $0 \in \text{Spec}(L)$ is $1$, $\text{Ker}(L) = \text{Span}(\mathbf{1})$, and $\text{Range}(L) = \mathbf{1}^{\perp}$, the orthogonal complement of $\text{Span}(\mathbf{1})$. 

We denote an arbitrarily oriented incidence matrix by $B \in \mathbb{R}^{n \times m}$ defined element-wise as $B_{kl} = 1$ if node $k$ is a sink for the oriented edge $l$, $-1$ if node $k$ is a source for the edge $l$, and $0$ otherwise. If $W \in \mathbb{R}^{m \times m}$ denotes the diagonal matrix of edge weights, the Laplacian can be defined in terms of $B$ and $W$ as $L = B WB^{\intercal}$. The Moore-Penrose pseudoinverse of $L$ is denoted by $L^{\dagger}$ and a Regularized Laplacian with regularization parameter $\beta$ by $L_{\text{reg}, \beta}$. $\Pi_n = I - \frac{1}{n}\mathbf{11}^{\intercal}$ denotes the orthogonal projector on to $\mathbf{1}^{\perp}$. We use $x(t)$ to denote time domain variables and $x(s)$ to denote Laplace/frequency domain variables.

\subsection{Kron Reduction and Effective Resistance}
\label{subsec:kron_reduction_and_effective_resistance}

Given a graph $\mathcal{G}$, whose Laplacian matrix $L \in \mathbb{R}^{n \times n}$, and a set of indices $\mathcal{V}_G \subset \mathcal{V} = \{1, \ldots \, n\}$, where $\overline{\mathcal{V}}_G$ denotes the complement of the set $\mathcal{V}_G$, the matrix $L$ can be partitioned as
\begin{align*}
    L = \begin{bmatrix}
        L_{\mathcal{V}_G \mathcal{V}_G} & L_{\mathcal{V}_G \overline{\mathcal{V}}_G}\\ L_{\overline{\mathcal{V}}_G\mathcal{V}_G} & L_{\overline{\mathcal{V}}_G\overline{\mathcal{V}}_G}
    \end{bmatrix},
\end{align*}
and the Schur complement of $L$ with respect to the principal submatrix $L_{\overline{\mathcal{V}}_G \overline{\mathcal{V}}_G}$ denoted by $L/L_{\overline{\mathcal{V}}_G \overline{\mathcal{V}}_G}$ is defined as
\begin{align*}
    L/L_{\overline{\mathcal{V}}_G \overline{\mathcal{V}}_G} := L_{\mathcal{V}_G \mathcal{V}_G} - L_{\overline{\mathcal{V}}_G\mathcal{V}_G} L_{\overline{\mathcal{V}}_G\overline{\mathcal{V}}_G}^{-1}L_{\mathcal{V}_G \overline{\mathcal{V}}_G}.
\end{align*}
We denote the Kron-reduced graph with boundary nodes $\mathcal{V}_G$ as $\mathcal{G}_{\text{red}}$, the corresponding reduced Laplacian matrix $L_{\text{red}}$ is the Schur complement of $L$ with respect to $L_{\overline{\mathcal{V}}_G \overline{\mathcal{V}}_G}$. Fig.~\ref{fig:simple_kron_reduction} shows the Kron reduction operation on the graph $\mathcal{G}$ with boundary nodes $\mathcal{V}_G = \{1,3,7\}$, highlighted in red.
\begin{figure}[h!]
  \centering
  \scalebox{0.8}{
\begin{tikzpicture}[>=stealth, node distance=1.6cm]

\node[circle, draw, red, thick] (1) at (0,2) {1};
\node[circle, draw]            (6) at (1.5,2.4) {6};
\node[circle, draw]            (2) at (1.5,1) {5};
\node[circle, draw]            (5) at (0,0.8) {2};
\node[circle, draw, red, thick] (3) at (1,0) {3};
\node[circle, draw]            (4) at (1,-1.3) {4};
\node[circle, draw] (6) at (1.5,2.4) {6};
\node[circle, draw, red, thick] (7) at (2.7,2.4) {7};

\draw[-] (1) -- (6);
\draw[-] (6) -- (2);
\draw[-] (2) -- (3);
\draw[-] (5) -- (3);
\draw[-] (1) -- (5);
\draw[-] (4) -- (3);
\draw[-] (6) -- (7);

\node at (-1.0, 1.2) {$\mathcal G$};

\draw[->, very thick] (3,0.5) -- (4.5,0.5);

\node[circle, draw, red, thick] (r1) at (5.5,1.5) {1};
\node[circle, draw, red, thick] (r2) at (7.1,1.5) {7};
\node[circle, draw, red, thick] (r3) at (6.30,0) {3};

\draw[-] (r1) -- (r2);
\draw[-] (r2) -- (r3);
\draw[-] (r1) -- (r3);

\node at (8, 1) {$\mathcal G_{\mathrm{red}}$};

\end{tikzpicture}}
  \caption{Kron reduction of $\mathcal{G}$ with vertex set $\mathcal{V} = \{1, \ldots, 7\}$, boundary nodes $\mathcal{V}_G = \{1, 3, 7\}$, and interior nodes $\mathcal{V}_{\overline{\mathcal{V}}_G} = \{2, 4, 5, 6\}$.}
  \label{fig:simple_kron_reduction}
\end{figure}

Suppose the edge weights $a_{ij} \geq 0$ on the graph represent the conductances (i.e., admittance for power networks) of the branches $\{i,j\} \in \mathcal{E}$. The vector of node voltages $v$ and current injections $J$ are related by $Lv = J$ where $L$, the conductance matrix of the network, is the graph Laplacian matrix. The effective resistance between nodes $i$ and $j$ denoted by $r_{ij}^{\text{eff}}$ is defined as the potential difference between node $i$ and $j$ when a unit current is injected at node $i$ and withdrawn from node $j$, that is,
\begin{align*}
    r_{ij}^{\text{eff}} &= (e_i - e_j)^{\intercal} v, \qquad \text{where } v = L^{\dagger}J = L^{\dagger}(e_i - e_j)\\
    &= (e_i - e_j)^{\intercal}L^{\dagger}(e_i - e_j).
\end{align*}

The effective resistance defines a distance metric on the graph $\mathcal{G}$ and the distance between nodes $i$ and $j$ on the graph, denoted by $d(i,j) = (r_{ij}^{\text{eff}})^{1/2}$ \cite{ghosh_minimizing_2008}. We denote the matrix of effective resistance by $R \in \mathbb{R}^{n \times n}$ and the total effective resistance, a measure of the size of the graph $\mathcal{G}$, is defined as
\begin{align*}
    R_{\text{tot}}(\mathcal{G}) = \frac{1}{2} \sum_{i, j = 1}^n r_{ij}^{\text{eff}}(\mathcal{G}) = \frac{1}{2} \sum_{i, j \in \mathcal{V}} r_{ij}^{\text{eff}}(\mathcal{G}),
\end{align*}
where $\mathcal{V} = \{1, \ldots, n\}$. $R_{\text{tot}} (\mathcal{G})$ is also called the \textit{Kirchhoff Index} of the graph $\mathcal{G}$.

\subsection{Power System Model}
\label{subsec:power_system_model}
We consider a lossless $AC$ power network, modeled as an undirected, connected graph with nodes $\mathcal{V} = \{1, \ldots, n\}$, and branches $\mathcal{E}$ representing the set of transmission lines. The network is composed of generator nodes $\mathcal{V}_G$, and load nodes $\mathcal{V}_L$. For each node $i$, we denote the per unit bus voltage magnitude by $|V_i| \geq 0$ and the voltage angle (in radians) by $\delta_i \in \mathbb{S}^1$, where $\mathbb{S}^1$ is the unit circle. Given the 
symmetric nodal admittance matrix $Y \in \mathbb{C}^{n \times n}$, the network dynamics can be modeled by the differential algebraic equations (DAE),
\begin{align}
    m_i \ddot{\delta}_i + d_i \dot{\delta}_i &= p_i - \sum_{j = 1}^na_{ij}\sin(\delta_i - \delta_j) \quad \forall i \in \mathcal{V}_G \label{eqn:generator_model}\\
    0 &= p_i - \sum_{j = 1}^na_{ij}\sin(\delta_i - \delta_j) \quad \forall i \in \mathcal{V}_L
    \label{eqn:load_model}
\end{align}
where $a_{ij} 
= |V_i| |V_j| \Im{(Y_{ij})} 
\geq 0$, and $a_{ij}\sin(\delta_i - \delta_j)$ is the real power flow on line $\{i, j\} \in \mathcal{E}$ of susceptance $\Im{(Y_{ij})} = b_{ij}$, $p_i$ is the net power injection at node $i$, $d_i > 0$ and $m_i > 0$ are the damping and inertia coefficients of the $i$th generator, respectively. The second-order differential equation~(\ref{eqn:generator_model}) is the classic swing equation for generator $i$ and the algebraic equation~(\ref{eqn:load_model}) enforces power balance constraint at load node $i$. We make the following typical \cite{paganini_global_2017, paganini_global_2020, guo_graph_2018, nagpal_designing_2022} simplifying assumptions about the network:
\begin{enumerate}[label=\roman*.]
    \item The network is lossless, that is, for each transmission line $\{i, j\} \in \mathcal{E}$, the resistance $r_{ij} = 0$.
    \item Node voltage magnitudes $|V_i| \approx 1.0$ p.u.
    \item The effect of reactive power on frequency dynamics can be ignored.
    \item The generator buses have identical synchronous generators (that is, $d_i = d$, and $m_i = m$ for all $i \in \mathcal{V}_G$).
\end{enumerate}

\section{Problem Formulation}
\label{sec:problem_formulation}
Consider a linearization of the power system model (\ref{eqn:generator_model}) - (\ref{eqn:load_model}) around a synchronous solution $(\delta^*, \mathbf{0})$ the network dynamics can be written in compact form as
\begin{align}
    M \ddot{\Delta \delta} + D \dot{\Delta \delta} = - L \Delta \delta,
    \label{eqn:linearized_dynamics}
\end{align}
where $\Delta \delta$ is the deviation of the voltage angles $\delta$ from $\delta^*$, $M = \text{blkdiag}(M_{\mathcal{V}_G}, \mathbf{0})$, $D = \text{blkdiag}(D_{\mathcal{V}_G}, \mathbf{0})$, $M_{\mathcal{V}_G} = \text{diag}(\{m_i\}_{i \in \mathcal{V}_G})$, $D_{\mathcal{V}_G} = \text{diag}(\{d_i\}_{i \in \mathcal{V}_G})$ and $L$ is the Laplacian matrix of the graph defined on the vertex set $\mathcal{V}$ whose edge weights are $a_{ij} = \Im(Y_{ij})\cos(\delta_i^* - \delta_j^*)$. If we assume that the angle differences are small (that is, $\delta_i^* - \delta_j^* \ll 1$ rad), $\cos(\delta_i^* - \delta_j^*) \approx 1.0$ for all $\{i, j\} \in \mathcal{E}$, the edge weights $a_{ij} \approx b_{ij}$, the susceptance of the line $\{i,j\}$. Eliminating the algebraic constraints in (\ref{eqn:linearized_dynamics}) associated with the load nodes $\mathcal{V}_L$ using Kron reduction, we obtain the following differential equation
\begin{align}
    M_{\mathcal{V}_G} \ddot{\Delta \delta}_{\mathcal{V}_G} + D_{\mathcal{V}_G} \dot{\Delta \delta}_{\mathcal{V}_G} = -L_{\text{red}}\Delta\delta_{\mathcal{V}_G},
    \label{eqn:reduced_linerized_model}
\end{align}
$L_{\text{red}}$ is the reduced Laplacian matrix obtained by taking the Schur complement of $L$ with respect to the principal submatrix defined by the set of load buses $\mathcal{V}_L$. The solutions to the model (3) can be shown to be locally equivalent to the reduced model (4) for almost all operating states \cite{hiskens_energy_1989}. If we define $\dot{\Delta \delta}_{\mathcal{V}_G} = \omega$, the reduced network can be represented by the block diagram Fig.~\ref{fig:coupled_oscillator_block_diagram}, similar to \cite{paganini_global_2017, paganini_global_2020}.
\begin{figure}[ht!]
  \centering
  \scalebox{0.8}{\begin{tikzpicture}[auto, >={Latex[length=2mm,width=1.5mm]}]

    \node[draw, minimum width=1.8cm, minimum height=1.2cm, fill=blue!20] (G) at (2.5,0) {$G$};
    \node at (2.5, 1) {\textbf{Oscillator Dynamics}};

    \node[draw, minimum width=1.5cm, minimum height=1.2cm, fill=blue!20] (Int) at (1.2,-2) {$L_{\text{red}}$};
    \node at (1.2, -3) {\textbf{Network Dynamics}};
    
    \node[draw, minimum width=0.8cm, minimum height=1.2cm, fill=blue!20] (L)   at (3.7,-2) {$\frac{1}{s}$};
    \node at (3.9, -3) {\textbf{Integrator}};

    \node[circle, draw, inner sep=1pt] (sum) at (0,0) {$\scriptscriptstyle +$};
    \node at (-0.25,-0.28) {$-$};

    \node[above left=15pt of sum.west, yshift=-10pt] {$u$};
    \node[above right=15pt of G.east, xshift=25pt, xshift=15pt, yshift=-10pt] {$ \omega$};
    
    \node[below left=2pt of sum.south, yshift=-20pt] {$\Delta p^{e}$};
    
    \node[above right=15pt of Int.east, xshift=-5pt, yshift=-10pt] {$\Delta \delta_{\mathcal{V}_G}$};

    \draw[->] (sum) -- (G);
    \draw[->] (G.east) -- ++(1.5,0) |- (L.east);
    \draw[->] (L.west) -- (Int.east);
    \draw[->] (Int.west) -| (sum.south);
    \draw[<-] (sum.west) -- ++(-1,0);
    \draw[->] (G.east) -- ++(2.5,0);

\end{tikzpicture}}
  \caption{Block diagram of second-order oscillators coupled through the network.}
  \label{fig:coupled_oscillator_block_diagram}
\end{figure}
The model (\ref{eqn:reduced_linerized_model}) is a network of second-order oscillators coupled through the graph $\mathcal{G}_{\text{red}}$ whose Laplacian matrix is $L_{\text{red}} \in \mathbb{R}^{k \times k}$, where $|\mathcal{V}_G| = k$ is the number of generator buses in the network, $u$ is an exogenous input (disturbance) to the network, $\omega$ is a vector of angular velocities of the generators, $\Delta p^e(s) = \frac{1}{s}L_{\text{red}} \omega(s)$ is a vector of network fluctuations at the generator nodes, $G(s) = \text{diag}(\{g_i(s)\}_{i \in \mathcal{V}_G})$ is a block diagonal matrix of the transfer functions for the generators, for each $i$,
\begin{align*}
    g_i(s) = \frac{\omega_i(s)}{u_i(s) - \Delta p_i^e(s)} = \frac{1}{m_is + d_i}.
\end{align*}
The matrix $L_{\text{red}}$ is symmetric and is diagonalizable by an orthonormal basis of eigenvectors $V \in\mathbb{R}^{k \times k}$ so that $L_{\text{red}} = V \Lambda V^{\intercal}$,
where $\Lambda = \text{diag}(\{\lambda_{i}\}_{i = 1}^k)$, and $0 = \lambda_1 < \lambda_2 \leq \cdots \leq \lambda_k$ are the real eigenvalues. We remark that by Theorem 5 of \cite{smith_interlacing_1992}, the eigenvalues of $L_{\text{red}}$ interlace those of $L$.
\begin{figure}[ht!]
  \centering
  \scalebox{0.5}{%
    \begin{tikzpicture}[auto, >={Latex[length=2mm,width=1.5mm]}]

    \node[draw, minimum width=1.8cm, minimum height=1.2cm, fill=blue!20] (G) at (3,0) {$G$};

    \node[draw, minimum width=1.2cm, minimum height=1.2cm, fill=blue!20] (Int) at (3,-2) {$\frac{1}{s} \Lambda$};

    \node[draw, minimum width=1cm, minimum height=1.2cm, fill=blue!20] (VT1)   at (-3,0) {$V^{T}$};
    \node[draw, minimum width=1cm, minimum height=1.2cm, fill=blue!20] (V1)   at (0.2,0) {$V$};

    \node[draw, minimum width=1cm, minimum height=1.2cm, fill=blue!20] (VT2)   at (6,0) {$V^{\intercal}$};

    \node[circle, draw, inner sep=1pt] (sum) at (-1.5,0) {$\scriptscriptstyle +$};
    \node at (-1.75,-0.28) {$-$};

    \node[draw, minimum width=1cm, minimum height=1.2cm, fill=blue!20] (V2)   at (9,0) {$V$};

    \node[above left=15pt of VT1.west, yshift=-10pt] {$u$};


    \node[above right=15pt of V2.east, xshift=-5pt, yshift=-10pt] {$\omega$};

    \draw[->] (V1.east) -- (G.west);
    \draw[<-] (VT1.west) -- ++(-1,0);
    \draw[->] (VT1.east) -- (sum.west);
    \draw[->] (sum.east) -- (V1.west);
    \draw[->] (G.east) -- (VT2.west);
    \draw[->] (VT2.east) -- (V2.west);
    \draw[->] (VT2.east) -- ++(1,0) |- (Int.east);
    \draw[->] (Int.west) -| (sum.south);
    \draw[->] (V2.east) -- ++(1,0);

\end{tikzpicture}}
  \caption{Spectral transformation of feedback loop.}
  \label{fig:spectral_transformation}
\end{figure}
A spectral transformation of the feedback loop as in Fig.~\ref{fig:spectral_transformation} allows for a decomposition and reduction of the model (\ref{eqn:reduced_linerized_model}). 
\begin{figure}[h!]
  \centering
  \scalebox{0.6}{%
    \begin{tikzpicture}[every node/.style={font=\small}, block/.style={draw, fill=gray!20}, auto, >={Latex[length=2mm,width=1.5mm]}]

    \def\H{1}          
    \def\W{1.25}       
    \def\N{4}          
    
    \draw[thick] (0,0) rectangle (5,4);
    
    
    \foreach \i in {0,...,3} {
        \draw[block]
            (\i*\W, {4-(\i+1)*\H})
            rectangle
            ({(\i+1)*\W}, {4-\i*\H});
    }
    
    \node at (0.625,3.5) {$h_1$};
    \node at (1.875,2.5) {$h_2$};
    \node at (3.15,1.55) {$\ddots$};
    \node at (4.375,0.5) {$h_k$};
    \node[draw, minimum width=1cm, minimum height=1.2cm, fill=blue!20] (V)   at (-2,2) {$V^{\intercal}$};
    \node[draw, minimum width=1cm, minimum height=1.2cm, fill=blue!20] (VT)   at (7,2) {$V$};
    \node[above left=0pt of V.west, xshift = -10pt, yshift=-0pt] {$u$};

    \node[above left=0pt of VT.east, xshift = 25pt, yshift=-0pt] {$\omega$};
    \draw[->] (5, 2) -- (VT.west);
    \draw[->] (V.east) -- (0, 2);
    \draw[->] (VT.east) -- ++(1.2, 0);
    \draw[<-] (V.west) -- ++(-1.0, 0);
\end{tikzpicture}}
  \caption{Block diagram of the closed loop transfer function from disturbances $u$ to the angular velocities $\omega$.}
  \label{fig:u_to_w_transfer_function}
\end{figure}
Eliminating the feedback path, we obtain the closed-loop transfer function with block diagram shown in Fig.~\ref{fig:u_to_w_transfer_function}. The closed loop transfer function from disturbance $u(s)$ to angular frequency $\omega(s)$
\begin{align*}
    &T_{\omega u}(s) = \frac{\omega(s)}{u(s)} = VH(s)V^{\intercal}, \quad H(s) = \text{diag}(h_i(s)).\\
    &\text{where} \quad h_i(s) = \frac{s g(s)}{s + \lambda_i g(s)}\quad i = 1, \ldots, k.
\end{align*}
For a step input defined as
\begin{align*}
    u(t) = \begin{cases}
        0, & t < 0,\\
        u_{0}\in \mathbb{R}^k, & t\geq 0,
    \end{cases}
\end{align*}
the angular frequency in time domain can be decomposed into
\begin{align*}
    \omega(t) = \overline{\omega}(t) \mathbf{1} + \widetilde{\omega}(t),
\end{align*}
where $\overline{\omega}(t)$ is a scalar-valued function of the system frequency {(the center of inertia)} that only depends on the generator parameters ($d$ and $m$) and the disturbance $u_0$, with no dependence on the network at all; $\widetilde{\omega}(t)$ is a vector-valued function whose entries are the deviations of the angular frequencies at the generator nodes from the system synchronous response. The term $\widetilde{\omega}(t)$ depends on the power network through the non-zero eigenvalues and the corresponding eigenvector of $L_{\text{red}}$, see Appendix~\ref{Appendix_1}. This decoupling clearly shows that there is a component of the overall angular frequency response that cannot be reduced by modifying the network in any way using the linearized model (\ref{eqn:reduced_linerized_model}) since the component $\overline{\omega}(t)\mathbf{1}$ evolves in $\text{Ker}(L_{\text{red}}) = \text{Span}(\mathbf{1})$, the Nullspace of $L_{\text{red}}$.

The $\mathcal{L}_2$ norm of the transient term (the cost of synchronization)
\begin{align*}
    \|\widetilde{\omega}\|_{2}^2
    = \int_{0}^{\infty} \widetilde{\omega}(t)^{\intercal} \widetilde{\omega}(t) dt = \frac{1}{2d}\sum_{i = 2}^{k} \frac{(v_{i}^{\intercal}u_0)^2}{\lambda_{i}},
\end{align*}
where $(\lambda_i, v_i)$ is the $i$th eigenpair of $L_{\text{red}}$, see Appendix~\ref{Appendix_1} for details. The term $v_i^{\intercal}u_0$ is the component of the projection of the disturbance $u_0$ in the direction of the $i$th eigenvector $v_i$. For an exogenous input $u_0 \sim \mathcal{N}(0, \sigma^2 I)$ the expected $\mathcal{L}_2$ norm of the transient term,
\begin{align*}
    \mathbb{E}_{u_0}[\|\widetilde{\omega}\|_2^2] 
    = \frac{\sigma^2}{2d} \frac{1}{k} R_{\text{tot}} (\mathcal{G}_{\text{red}}),
\end{align*}
where $R_{\text{tot}}(\mathcal{G}_{\text{red}})$ is the total effective resistance of the Kron-reduced graph $\mathcal{G}_{\text{red}}$, $d$ is the damping coefficient of the generators, and $k$ is the number of generator nodes on the graph $\mathcal{G}$.

\section{Optimization Problems}
\label{sec:methodology}
In this section, we show that under the stated assumptions on the power system, the expected $\mathcal{L}_2$ norm of the transient component $\widetilde{\omega}(t)$ for an exogenous input $u_0 \sim \mathcal{N}(0, \sigma^2 I)$ is a convex function of the edge weights of the graph $\mathcal{G}$ (the susceptances of the transmission lines).

\subsection{The SDP Minimizing Frequency Transients}
\label{subsec:sdp_formulation}
We formulate a convex optimization problem to minimize this expectation by optimally allocating a budget $\alpha \geq 0$ across a set of candidate edges $\mathcal{E}_c \subseteq \mathcal{V} \times \mathcal{V}$. Essentially, we are interested in modifying the coupling matrix $L_{\text{red}}$ (in Fig.~\ref{fig:coupled_oscillator_block_diagram}) between the second-order oscillators through appropriate updates to the
edge weights $a_{ij}$ of the graph $\mathcal{G}$ in order to minimize $\mathbb{E}_{u_0} \left[\|\widetilde{w}\|_2^2\right]$.

We introduce the following preliminary propositions to facilitate the formulation of this problem.
\begin{proposition}
    For a connected undirected graph $\mathcal{G}$ with an irreducible Laplacian matrix $L$, the matrix $L_{\text{red}}$ is an irreducible Laplacian for any set of boundary nodes $\mathcal{V}_G \subset \mathcal{V}$.
    \label{L_red_is_laplacian}
\end{proposition}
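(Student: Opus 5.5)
Write $\overline{\mathcal{V}}_G = \mathcal{V}\setminus\mathcal{V}_G$ for the interior nodes and abbreviate the blocks of $L$ as $L_{11} = L_{\mathcal{V}_G\mathcal{V}_G}$, $L_{12} = L_{\mathcal{V}_G\overline{\mathcal{V}}_G}$, $L_{21}=L_{12}^{\intercal}$, $L_{22} = L_{\overline{\mathcal{V}}_G\overline{\mathcal{V}}_G}$. We assume $\mathcal{V}_G$ is nonempty; if $\overline{\mathcal{V}}_G$ is empty there is nothing to prove. The plan is to verify four properties of $L_{\text{red}} = L_{11} - L_{12}L_{22}^{-1}L_{21}$ in turn: it is well defined, symmetric with zero row sums, has nonpositive off-diagonal entries, and has a connected support graph.

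\textbf{Well-definedness and symmetry.} First I show $L_{22}$ is nonsingular, in fact a nonsingular M-matrix. It is a principal submatrix of $L$, so it is a Z-matrix. It is weakly diagonally dominant, with each row sum equal to the total weight from that interior node to $\mathcal{V}_G$. Because $\mathcal{G}$ is connected and $\mathcal{V}_G\neq\emptyset$, every connected component of the subgraph induced on $\overline{\mathcal{V}}_G$ contains a node with an edge into $\mathcal{V}_G$. On each such component, $L_{22}$ is therefore irreducibly diagonally dominant, with strict dominance in at least one row. Taussky's theorem then gives nonsingularity, and since $L_{22}$ is symmetric and diagonally dominant, also $L_{22}\succ 0$. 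Standard M-matrix theory gives $L_{22}^{-1}\geq 0$ entrywise, and entrywise positive on each block of $L_{22}$ that corresponds to a connected component. Symmetry of $L_{\text{red}}$ is then immediate.

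\textbf{Zero row sums.} From $L\mathbf{1}=0$ we get $L_{11}\mathbf{1}+L_{12}\mathbf{1}=0$ and $L_{21}\mathbf{1}+L_{22}\mathbf{1}=0$. The second equation gives $L_{22}^{-1}L_{21}\mathbf{1} = -\mathbf{1}$. Substituting,
\[
L_{\text{red}}\mathbf{1} = L_{11}\mathbf{1} + L_{12}\mathbf{1} = 0.
\]

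\textbf{Sign pattern.} For $i\neq j$ in $\mathcal{V}_G$,
\[
[L_{\text{red}}]_{ij} = [L_{11}]_{ij} - [L_{12}L_{22}^{-1}L_{21}]_{ij}.
\]
The first term is $\leq 0$. The matrices $L_{12}$ and $L_{21}$ are entrywise $\leq 0$ and $L_{22}^{-1}\geq 0$, so the product $L_{12}L_{22}^{-1}L_{21}$ is entrywise $\geq 0$. Hence $[L_{\text{red}}]_{ij}\leq 0$. Together with the zero row sums, this shows $L_{\text{red}}$ is the Laplacian of a weighted undirected graph on $\mathcal{V}_G$.

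\textbf{Irreducibility.} This is the main obstacle. I need the support graph of $L_{\text{red}}$ to be connected. Take $i,j\in\mathcal{V}_G$ and a path between them in $\mathcal{G}$. Cut it at its intermediate visits to $\mathcal{V}_G$. This splits it into segments whose endpoints are consecutive boundary nodes $a,b\in\mathcal{V}_G$ and whose interior vertices, if any, all lie in $\overline{\mathcal{V}}_G$. There are two cases.
\begin{itemize}
\item If a segment is a direct edge, then $[L_{11}]_{ab}<0$. Since the subtracted term is nonnegative, $[L_{\text{red}}]_{ab}<0$.
\item Otherwise $a$ is adjacent to some interior node $p$ and $b$ to some interior node $q$, with $p$ and $q$ in the same connected component of the graph induced on $\overline{\mathcal{V}}_G$. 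By the positivity of $L_{22}^{-1}$ within that component, $[L_{22}^{-1}]_{pq}>0$. Then $[L_{12}L_{22}^{-1}L_{21}]_{ab}\geq [L_{12}]_{ap}[L_{22}^{-1}]_{pq}[L_{21}]_{qb}>0$, because every term of the sum is nonnegative. Hence $[L_{\text{red}}]_{ab}<0$.
\end{itemize}
Either way consecutive boundary nodes are adjacent in $\mathcal{G}_{\text{red}}$, so $i$ and $j$ are connected and $L_{\text{red}}$ is irreducible.

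The delicate ingredient is the strict positivity of $L_{22}^{-1}$ within a component. I would justify it through the Neumann series: write $L_{22}=D_2-N$ with $D_2$ diagonal and $N\geq 0$, so that $L_{22}^{-1}=\sum_{t\geq 0}(D_2^{-1}N)^tD_2^{-1}$. The series converges because $\rho(D_2^{-1}N)<1$, which follows from the irreducible strict diagonal dominance established in the first step. The $(p,q)$ entry is positive as soon as some power of $N$ links $p$ to $q$, and connectivity of the component guarantees such a power exists.
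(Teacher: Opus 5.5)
Your proof is correct, but it takes a different route from the paper. The paper gives no argument of its own. It notes that $L$ is a singular M-matrix and cites Theorem~4.3 of \cite{ayazifar_graph_nodate}, i.e.\ the closure of irreducible singular M-matrices under Schur complementation with respect to a proper principal submatrix. You instead prove the result from first principles:
\begin{enumerate}
\item nonsingularity of $L_{22}=L_{\overline{\mathcal{V}}_G\overline{\mathcal{V}}_G}$, from Taussky's theorem applied to each interior component;
\item zero row sums, from $L\mathbf{1}=0$;
\item nonpositive off-diagonal entries, from $L_{22}^{-1}\geq 0$;
\item connectivity of $\mathcal{G}_{\text{red}}$, by cutting a path at its boundary visits and using the strict positivity of $L_{22}^{-1}$ on each component block, which you get from the Neumann series.
\end{enumerate}
The citation buys brevity and places the proposition inside general M-matrix theory. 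Your route buys a self-contained argument that also justifies the invertibility of $L_{22}$, which the paper merely asserts in the proof of Lemma~\ref{invariance_reff_ij}. With almost no extra work it also gives an explicit description of the reduced topology. Since $L_{22}^{-1}$ is block diagonal with entrywise positive blocks, two boundary nodes are adjacent in $\mathcal{G}_{\text{red}}$ exactly when they are adjacent in $\mathcal{G}$ or both border a common connected component of the subgraph induced on $\overline{\mathcal{V}}_G$. This is the mechanism behind Fig.~\ref{fig:simple_kron_reduction}.
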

\begin{proof}
    The proof follows from Theorem 4.3 \cite{ayazifar_graph_nodate}, since $L$ is a singular $M$-matrix and $L_{\text{red}}$ is the Schur complement of $L$ with respect to the principal submatrix defined by $\overline{\mathcal{V}}_G \subset \mathcal{V}$.
\end{proof}
\noindent
Proposition~\ref{L_red_is_laplacian} ensures that the Schur complement is well defined for any connected graph and that the resulting matrix $L_{\text{red}}$ is indeed an irreducible Laplacian matrix. 
\begin{lemma}
    The pairwise effective resistance $r_{ij}^{\text{eff}}$ is invariant under Kron reduction, that is, $r_{ij}^{\text{eff}}(\mathcal{G}_{\text{red}}) = r_{ij}^{\text{eff}}(\mathcal{G})$ for all $i,j \in \mathcal{V}_G \subset \mathcal{V}$.
    \label{invariance_reff_ij}
\end{lemma}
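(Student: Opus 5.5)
We will argue directly from the definition of effective resistance as a potential difference under a unit current injection, using the block structure of $L$ from Section~\ref{subsec:kron_reduction_and_effective_resistance}. Fix $i,j \in \mathcal{V}_G$ and set $J = e_i - e_j \in \mathbb{R}^n$. Since $J$ is supported on $\mathcal{V}_G$, we partition it as $J = (J_G, \mathbf{0})$ with $J_G = e_i - e_j \in \mathbb{R}^k$ (indices now taken in $\mathcal{V}_G$). Because $\mathbf{1}^{\intercal}J = 0$ and $\mathcal{G}$ is connected, the system $Lv = J$ has a solution $v = L^{\dagger}J$, unique up to adding multiples of $\mathbf{1}$. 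The quantity $(e_i - e_j)^{\intercal} v$ does not depend on that choice, and it equals $r_{ij}^{\text{eff}}(\mathcal{G})$.

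Writing $v = (v_G, v_{\overline{G}})$, the second block row of $Lv = J$ gives $L_{\overline{\mathcal{V}}_G\mathcal{V}_G} v_G + L_{\overline{\mathcal{V}}_G\overline{\mathcal{V}}_G} v_{\overline{G}} = \mathbf{0}$. The submatrix $L_{\overline{\mathcal{V}}_G\overline{\mathcal{V}}_G}$ is invertible: it is a proper principal submatrix of the irreducible singular M-matrix $L$, and this invertibility is what makes the Schur complement well defined in Proposition~\ref{L_red_is_laplacian}. Hence $v_{\overline{G}} = -L_{\overline{\mathcal{V}}_G\overline{\mathcal{V}}_G}^{-1}L_{\overline{\mathcal{V}}_G\mathcal{V}_G} v_G$. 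Substituting into the first block row gives
\begin{align*}
    \bigl(L_{\mathcal{V}_G\mathcal{V}_G} - L_{\mathcal{V}_G\overline{\mathcal{V}}_G}L_{\overline{\mathcal{V}}_G\overline{\mathcal{V}}_G}^{-1}L_{\overline{\mathcal{V}}_G\mathcal{V}_G}\bigr) v_G = L_{\text{red}}\, v_G = J_G .
\end{align*}
The first block of any solution of the full system therefore solves the reduced system.

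By Proposition~\ref{L_red_is_laplacian}, $L_{\text{red}}$ is the Laplacian of a connected graph, so $\text{Ker}(L_{\text{red}}) = \text{Span}(\mathbf{1})$. Every solution of $L_{\text{red}} w = J_G$ is then $w = L_{\text{red}}^{\dagger}J_G + c\mathbf{1}$ for some scalar $c$. In particular $v_G$ has this form, and so
\begin{align*}
    r_{ij}^{\text{eff}}(\mathcal{G}) = (e_i - e_j)^{\intercal}v = J_G^{\intercal}v_G = J_G^{\intercal}L_{\text{red}}^{\dagger}J_G = r_{ij}^{\text{eff}}(\mathcal{G}_{\text{red}}),
\end{align*}
where the $c\mathbf{1}$ term drops out because $\mathbf{1}^{\intercal}J_G = 0$.

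The main point requiring care is the pseudoinverse bookkeeping. Neither $L^{\dagger}J$ nor its restriction $v_G$ need equal $L_{\text{red}}^{\dagger}J_G$ exactly, since they can differ by a multiple of $\mathbf{1}$. The argument avoids this by working with arbitrary solutions and using $\mathbf{1}^{\intercal}J_G = 0$ to kill the nullspace component. As an alternative route to the same identity $J_G^{\intercal}v_G = J_G^{\intercal}L_{\text{red}}^{\dagger}J_G$, one could use the Dirichlet-principle characterization: $1/r_{ij}^{\text{eff}}$ equals the minimal energy $x^{\intercal}Lx$ over $x$ with $x_i - x_j = 1$, and minimizing over interior coordinates first yields $x_G^{\intercal}L_{\text{red}}x_G$ by the standard Schur complement minimization identity.
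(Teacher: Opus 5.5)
Your proof is correct and follows the same route as the paper's Appendix~B: inject $e_i - e_j$ on the boundary nodes, use the nonsingular block $L_{\overline{\mathcal{V}}_G\overline{\mathcal{V}}_G}$ to eliminate the interior potentials so that $L_{\text{red}} v_{\mathcal{V}_G} = e_i - e_j$, and then read off the potential difference. Your handling of the $c\mathbf{1}$ term is more careful than the paper's. The paper writes $v_{\mathcal{V}_G} = L_{\text{red}}^{\dagger}(e_i - e_j)$ outright, but this holds only up to a multiple of $\mathbf{1}$; the discrepancy is harmless because $\mathbf{1}^{\intercal}(e_i - e_j) = 0$.
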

\begin{proof}
    See Appendix~\ref{appendix:invariance_of_r_ij_proof}.
\end{proof}
This invariance under Kron reduction allows us to dispense with an explicit characterization of the effective resistance in terms of the underlying graph when the graph under consideration is clear from context. In the remainder of this work, we use $r_{ij}^{\text{eff}}$ to denote the effective resistance between generator nodes $i$ and $j$. A consequence of the invariance of $r_{ij}^{\text{eff}}$, is that the total effective resistance satisfies $R_{\text{tot}}(\mathcal{G}_{\text{red}}) \leq R_{\text{tot}}(\mathcal{G})$.
\begin{lemma}
    The effective resistance $r_{ij}^{\text{eff}} 
    = (e_i - e_j)^{\intercal}L_{\text{reg}, \beta}^{-1}(e_i - e_j)$ for all $\beta \neq 0$.
    \label{r_ij_equality_pinvL_invL}
\end{lemma}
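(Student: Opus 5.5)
The regularized Laplacian is not defined before the statement, so I take the standard definition $L_{\text{reg},\beta} = L + \frac{\beta}{n}\mathbf{1}\mathbf{1}^{\intercal}$, or equivalently $L + \beta\,\frac{1}{n}\mathbf{1}\mathbf{1}^{\intercal}$. With this choice the regularization only shifts the zero eigenvalue along $\mathbf{1}$. If the paper instead uses $L + \beta \mathbf{1}\mathbf{1}^{\intercal}$, the argument is identical with $\beta$ replaced by $n\beta$.

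The plan is spectral. Since $\mathcal{G}$ is connected, write $L = \sum_{i=2}^n \lambda_i v_i v_i^{\intercal}$ with orthonormal eigenvectors, $\lambda_i > 0$, and $v_1 = \mathbf{1}/\sqrt{n}$. Then $L_{\text{reg},\beta} = \beta v_1 v_1^{\intercal} + \sum_{i\ge 2}\lambda_i v_i v_i^{\intercal}$ has the same eigenvectors as $L$. Its spectrum is $\{\beta,\lambda_2,\ldots,\lambda_n\}$, so it is invertible exactly when $\beta\neq 0$; this is where the hypothesis is used. Hence
\begin{align*}
L_{\text{reg},\beta}^{-1} = \frac{1}{\beta} v_1 v_1^{\intercal} + \sum_{i=2}^n \frac{1}{\lambda_i} v_i v_i^{\intercal} = L^{\dagger} + \frac{1}{n\beta}\mathbf{1}\mathbf{1}^{\intercal}.
\end{align*}
The last equality uses the spectral form of the Moore--Penrose pseudoinverse for a symmetric matrix, $L^{\dagger} = \sum_{i\ge 2}\lambda_i^{-1} v_i v_i^{\intercal}$.

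Next, sandwich this identity between $e_i - e_j$. Because $\mathbf{1}^{\intercal}(e_i - e_j) = 0$, the rank-one term vanishes, and we get $(e_i-e_j)^{\intercal}L_{\text{reg},\beta}^{-1}(e_i-e_j) = (e_i-e_j)^{\intercal}L^{\dagger}(e_i-e_j) = r_{ij}^{\text{eff}}$, using the definition of effective resistance in Section~\ref{subsec:kron_reduction_and_effective_resistance}.

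The result holds for $\mathcal{G}$, and therefore also for $\mathcal{G}_{\text{red}}$ with $L_{\text{red}}$ and $k$ in place of $L$ and $n$. This uses Proposition~\ref{L_red_is_laplacian}, which guarantees $L_{\text{red}}$ is a connected Laplacian. Lemma~\ref{invariance_reff_ij} then makes the notation $r_{ij}^{\text{eff}}$ unambiguous.

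The only real obstacle is bookkeeping rather than mathematics. One must pin down the definition of $L_{\text{reg},\beta}$ so that the regularizing term is a multiple of $\mathbf{1}\mathbf{1}^{\intercal}$ and therefore commutes with $L$. One must also check that $\beta$ may be negative: the resulting matrix is then indefinite but still invertible, and the identity survives unchanged. If the regularization were a general rank-one term $ww^{\intercal}$ with $w\notin\text{Span}(\mathbf{1})$, I would instead use the Sherman--Morrison formula restricted to $\mathbf{1}^{\perp}$. That route is more delicate, and I do not expect it to be needed.
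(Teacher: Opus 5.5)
Your proposal is correct and is essentially the paper's proof. The paper likewise writes $L_{\text{reg},\beta} = \beta v_1 v_1^{\intercal} + \sum_{i=2}^n \lambda_i v_i v_i^{\intercal}$ (that is, $L + \frac{\beta}{n}\mathbf{1}\mathbf{1}^{\intercal}$, as made explicit in Lemma~\ref{L_reg_positive_definite}), inverts it spectrally to get $\frac{1}{\beta n}\mathbf{1}\mathbf{1}^{\intercal} + \sum_{i\ge 2}\lambda_i^{-1} v_i v_i^{\intercal}$, and notes that $e_i - e_j$ annihilates the rank-one term, leaving $(e_i-e_j)^{\intercal}L^{\dagger}(e_i-e_j)$; your closing remark about transferring the identity to $\mathcal{G}_{\text{red}}$ is harmless but not needed for the lemma as stated.
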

\begin{proof}
    The regularized Laplacian is defined as, $L_{\text{reg}, \beta} =\beta v_1 v_1^{\intercal} +  \sum_{i = 2}^n \lambda_i v_i v_i^{\intercal}$, and for all $\beta \neq 0$,
    \begin{align*}
        L_{\text{reg}, \beta}^{-1} &= \frac{1}{\beta} v_1 v_1^{\intercal} +  \sum_{i = 2}^n \frac{1}{\lambda_i} v_i v_i^{\intercal} = \frac{1}{\beta n} \mathbf{1} \mathbf{1}^{\intercal} +  \sum_{i = 2}^n \frac{1}{\lambda_i} v_i v_i^{\intercal}.
    \end{align*}
    The quadratic form,
    \begin{align*}
        (e_i - e_j)^{\intercal}L_{\text{reg}, \beta}^{-1}(e_i - e_j) &= \underbrace{\frac{1}{\beta n} (e_i - e_j)^{\intercal} \mathbf{1} \mathbf{1}^{\intercal} (e_i - e_j)}_{= 0}\\
        & + (e_i - e_j)^{\intercal}\sum_{i = 2}^n \frac{1}{\lambda_i} v_i v_i^{\intercal} (e_i - e_j)\\
        &= (e_i - e_j)^{\intercal} L^{\dagger} (e_i - e_j) = r_{ij}^{\text{eff}},
    \end{align*}
    by the definition of effective resistance Section~\ref{subsec:kron_reduction_and_effective_resistance}.
\end{proof}
\begin{theorem}
    For a connected undirected graph $\mathcal{G}$ with Laplacian matrix $L$, the total effective resistance of the Kron-reduced graph $R_{\text{tot}}(\mathcal{G}_{\text{red}})$ with boundary nodes $\mathcal{V}_G \subset \mathcal{V}$ is a linear function of $L_{\text{reg}, \beta}^{-1}$.
    \label{r_tot_linear_in_Lreg_inv}
\end{theorem}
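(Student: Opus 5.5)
Our plan is to write $R_{\text{tot}}(\mathcal{G}_{\text{red}})$ as a trace of $L_{\text{reg},\beta}^{-1}$ against a fixed matrix determined only by the boundary set $\mathcal{V}_G$. We first apply Lemma~\ref{invariance_reff_ij}: for all $i,j\in\mathcal{V}_G$ we have $r_{ij}^{\text{eff}}(\mathcal{G}_{\text{red}})=r_{ij}^{\text{eff}}(\mathcal{G})$. This lets us rewrite
\begin{align*}
R_{\text{tot}}(\mathcal{G}_{\text{red}})=\frac{1}{2}\sum_{i,j\in\mathcal{V}_G} r_{ij}^{\text{eff}}(\mathcal{G}),
\end{align*}
so the sum involves only the full graph Laplacian $L$ and never the Schur complement. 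This is the step that removes the nonlinear dependence on $L$ coming from $L_{\overline{\mathcal{V}}_G\overline{\mathcal{V}}_G}^{-1}$.

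Next we invoke Lemma~\ref{r_ij_equality_pinvL_invL}, which gives $r_{ij}^{\text{eff}}=(e_i-e_j)^{\intercal}L_{\text{reg},\beta}^{-1}(e_i-e_j)=\operatorname{tr}\bigl(L_{\text{reg},\beta}^{-1}(e_i-e_j)(e_i-e_j)^{\intercal}\bigr)$ for any $\beta\neq 0$, with $L_{\text{reg},\beta}$ the $n\times n$ regularization of $L$. Summing over $i,j\in\mathcal{V}_G$ and using linearity of the trace, we get
\begin{align*}
R_{\text{tot}}(\mathcal{G}_{\text{red}})=\operatorname{tr}\bigl(L_{\text{reg},\beta}^{-1}Q\bigr),\qquad Q=\frac{1}{2}\sum_{i,j\in\mathcal{V}_G}(e_i-e_j)(e_i-e_j)^{\intercal}.
\end{align*}
The matrix $Q$ does not depend on the edge weights, so the map $X\mapsto\operatorname{tr}(XQ)$ is linear in $X=L_{\text{reg},\beta}^{-1}$, which is the claim.

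For completeness and later use in the SDP, we would identify $Q$ in closed form. Let $P\in\mathbb{R}^{n\times n}$ be the diagonal selector of $\mathcal{V}_G$, with $P_{ii}=1$ if $i\in\mathcal{V}_G$ and $0$ otherwise, and let $\mathbf{1}_G=P\mathbf{1}$. Expanding the sum gives
\begin{align*}
Q=kP-\mathbf{1}_G\mathbf{1}_G^{\intercal},
\end{align*}
which is the Laplacian of the complete graph on $\mathcal{V}_G$, padded with zeros. Equivalently $Q=kP\,\Pi_n' P$, where $\Pi_n'$ is the centering projector restricted to the generator block. This recovers the familiar formula $R_{\text{tot}}=k\operatorname{tr}(L^{\dagger})$ when $\mathcal{V}_G=\mathcal{V}$.

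We expect no serious obstacle; the one point needing care is $\beta$-independence. The $\frac{1}{\beta n}\mathbf{1}\mathbf{1}^{\intercal}$ term contributes $\operatorname{tr}(\mathbf{1}\mathbf{1}^{\intercal}Q)=\mathbf{1}^{\intercal}Q\mathbf{1}=0$, because each $e_i-e_j$ is orthogonal to $\mathbf{1}$. So the identity holds for every $\beta\neq0$, consistent with Lemma~\ref{r_ij_equality_pinvL_invL}. We would also remark that $\beta>0$ makes $L_{\text{reg},\beta}\succ0$, so $\operatorname{tr}(L_{\text{reg},\beta}^{-1}Q)$ with $Q\succeq0$ is convex in $L$. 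This prepares the SDP reformulation via a Schur complement epigraph.
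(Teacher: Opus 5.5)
Your proposal is correct and follows essentially the same route as the paper: invariance of $r_{ij}^{\text{eff}}$ under Kron reduction (Lemma~\ref{invariance_reff_ij}) plus Lemma~\ref{r_ij_equality_pinvL_invL}, then collecting the quadratic forms into a single trace against a fixed matrix. Your $Q = kP - \mathbf{1}_{\mathcal{V}_G}\mathbf{1}_{\mathcal{V}_G}^{\intercal}$ is exactly $kC$ for the paper's $C = EE^{\intercal} - \frac{1}{k}\mathbf{1}_{\mathcal{V}_G}\mathbf{1}_{\mathcal{V}_G}^{\intercal}$ (with $P = EE^{\intercal}$), so you recover $R_{\text{tot}}(\mathcal{G}_{\text{red}}) = k\,\operatorname{tr}(C^{\intercal}L_{\text{reg},\beta}^{-1})$; the paper merely expands entrywise as $Y_{ii}+Y_{jj}-2Y_{ij}$ rather than summing the rank-one matrices $(e_i-e_j)(e_i-e_j)^{\intercal}$.
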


\begin{proof}
    Given a graph $\mathcal{G} = (\mathcal{V}, \mathcal{E}, A)$, let $\mathcal{V}_{G} \subset \mathcal{V}$ and let $\mathbf{1}_{\mathcal{V}_G}$ denote the indicator vector for the set $\mathcal{V}_G$ defined as
    \begin{align*}
        [\mathbf{1}_{\mathcal{V}_G}]_i = \begin{cases}
            1 & \text{if }i \in \mathcal{V}_G\\
            0 & \text{otherwise},
        \end{cases}
    \end{align*}
    also define an orthonormal matrix $E \in \mathbb{R}^{n \times k}$ whose columns are the standard basis vectors $e_i$ for each $i \in \mathcal{V}_G$.
    By definition, the total effective resistance of $\mathcal{G}_{\text{red}}$ with boundary nodes $\mathcal{V}_G$,
    \begin{align*}
        R_{\text{tot}}(\mathcal{G}_{\text{red}})
        = \frac{1}{2}\sum_{i,j \in \mathcal{V}_G} r_{ij}^{\text{eff}}.
    \end{align*}
    Using Lemma~\ref{r_ij_equality_pinvL_invL},
    \begin{align*}
        R_{\text{tot}}(\mathcal{G}_{\text{red}}) = \frac{1}{2}\sum_{i,j \in \mathcal{V}_G} (e_i - e_j)^{\intercal}L_{\text{reg}, \beta}^{-1}(e_i - e_j).
    \end{align*}
    Define $Y = L_{\text{reg}, \beta}^{-1}$,
    \begin{align*}
        R_{\text{tot}}(\mathcal{G}_{\text{red}}) &= \frac{1}{2} \sum_{i, j \in \mathcal{V}_G} (Y_{ii} + Y_{jj} - 2Y_{ij})\\
        &= k \sum_{i \in \mathcal{V}_G} Y_{ii} - \sum_{i,j \in \mathcal{V}_G} Y_{ij}\\
        &= k \, \text{trace}(Y_{\mathcal{V}_G}) - \mathbf{1}_{\mathcal{V}_G}^{\intercal}Y\mathbf{1}_{\mathcal{V}_G},
    \end{align*}
    where $Y_{\mathcal{V}_G}$ is the principal submatrix of $Y$ defined by the set $\mathcal{V}_G$, it is straightforward to verify that $Y_{\mathcal{V}_G} = EYE^{\intercal}$. Using the cyclic property and linearity of trace,
    \begin{align*}
        R_{\text{tot}}(\mathcal{G}_{\text{red}}) &= k \, \text{trace}(EYE^{\intercal}) - \text{trace} ({\mathbf{1}_{\mathcal{V}_G}^{\intercal}Y\mathbf{1}_{\mathcal{V}_G}})\\ 
        &= k \, \text{trace}\left(C^{\intercal} Y\right),
    \end{align*}
    where $C = EE^{\intercal} - \frac{1}{k}\mathbf{1}_{\mathcal{V}_G} \mathbf{1}_{\mathcal{V}_G}^{\intercal}$. Notice that for a fixed set of boundary nodes $\mathcal{V}_G \subset \mathcal{V}$, the matrix $C$ is constant, hence the total effective resistance on the Kron-reduced graph $R_{\text{tot}}(\mathcal{G}_{\text{red}})$ and $\text{trace}(L_{\text{red}}^{\dagger})$ is a linear function of the regularized inverse $L_{\text{reg}, \beta}^{-1}$.
\end{proof}
As a direct consequence of Theorem~\ref{r_tot_linear_in_Lreg_inv}, we obtain the following corollary.
\begin{corollary}
    For $u_0 \sim \mathcal{N}(0, \sigma^2 I)$, the expected $\mathcal{L}_2$ norm of the transient component of the angular frequency $\widetilde{w}(t)$ at the generator nodes is a linear function of $L_{\text{reg}, \beta}^{-1}$; specifically,
    \begin{align*}
        \mathbb{E}_{u_0} \left[\|\widetilde{w}\|_2^2\right] 
        = \frac{\sigma^2}{2d}\, \text{trace}(C^{\intercal}L_{\text{reg}, \beta}^{-1}),
    \end{align*}
    where $C$ is defined as in Theorem~\ref{r_tot_linear_in_Lreg_inv}.
    \label{corollary_1}
\end{corollary}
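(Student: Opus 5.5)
The plan is to chain the expected-cost identity from Section~\ref{sec:problem_formulation} with the trace representation of Theorem~\ref{r_tot_linear_in_Lreg_inv}; the constant $k$ then cancels. First I will re-derive the expectation carefully, so that the factor $\frac{1}{k}$ is justified rather than just quoted.

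The starting point is the closed form
\begin{align*}
\|\widetilde{\omega}\|_2^2 = \frac{1}{2d}\sum_{i=2}^k \frac{(v_i^\intercal u_0)^2}{\lambda_i},
\end{align*}
where $(\lambda_i, v_i)$ are the eigenpairs of $L_{\text{red}}$. Since $u_0 \sim \mathcal{N}(0,\sigma^2 I)$ and the $v_i$ are orthonormal, we have $\mathbb{E}[(v_i^\intercal u_0)^2] = v_i^\intercal(\sigma^2 I)v_i = \sigma^2$. Linearity of expectation then gives
\begin{align*}
\mathbb{E}_{u_0}[\|\widetilde{\omega}\|_2^2] = \frac{\sigma^2}{2d}\sum_{i=2}^k \lambda_i^{-1} = \frac{\sigma^2}{2d}\,\text{trace}(L_{\text{red}}^\dagger).
\end{align*}

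Next I will invoke the standard identity $R_{\text{tot}} = k\,\text{trace}(L^\dagger)$ for a connected graph on $k$ vertices, applied to $\mathcal{G}_{\text{red}}$. Proposition~\ref{L_red_is_laplacian} guarantees that $L_{\text{red}}$ is an irreducible Laplacian, so the identity applies. To prove the identity, write $Z = L_{\text{red}}^\dagger$. Because $\mathbf{1}^\intercal Z\mathbf{1} = 0$, the computation in the proof of Theorem~\ref{r_tot_linear_in_Lreg_inv} gives
\begin{align*}
\frac{1}{2}\sum_{i,j}(Z_{ii}+Z_{jj}-2Z_{ij}) = k\,\text{trace}(Z) - \mathbf{1}^\intercal Z \mathbf{1} = k\,\text{trace}(Z).
\end{align*}
This recovers
\begin{align*}
\mathbb{E}_{u_0}[\|\widetilde{\omega}\|_2^2] = \frac{\sigma^2}{2d}\frac{1}{k}R_{\text{tot}}(\mathcal{G}_{\text{red}}).
\end{align*}
In this sum the effective resistances are those of $\mathcal{G}_{\text{red}}$. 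By Lemma~\ref{invariance_reff_ij} they equal the corresponding ones in $\mathcal{G}$.

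Finally, I will substitute $R_{\text{tot}}(\mathcal{G}_{\text{red}}) = k\,\text{trace}(C^\intercal L_{\text{reg},\beta}^{-1})$ from Theorem~\ref{r_tot_linear_in_Lreg_inv}. The factor $k$ cancels with $\frac{1}{k}$, which yields the stated formula. Linearity in $L_{\text{reg},\beta}^{-1}$ is then immediate, because $C$ depends only on the fixed set $\mathcal{V}_G$ and the map $Y \mapsto \text{trace}(C^\intercal Y)$ is linear.

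The only delicate point is bookkeeping rather than a real obstacle. One must check that the pseudoinverse trace of the reduced Laplacian, which is $k\times k$, matches the effective resistances measured through the full-size regularized inverse, which is $n\times n$. That match is exactly what Lemmas~\ref{invariance_reff_ij} and~\ref{r_ij_equality_pinvL_invL} supply, and the result is independent of $\beta\neq 0$.
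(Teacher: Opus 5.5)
Your proposal is correct and follows the paper's own route. The paper also combines the Appendix~\ref{Appendix_1} identity $\mathbb{E}_{u_0}[\|\widetilde{\omega}\|_2^2] = \frac{\sigma^2}{2d}\frac{1}{k}R_{\text{tot}}(\mathcal{G}_{\text{red}})$ with Theorem~\ref{r_tot_linear_in_Lreg_inv}'s $R_{\text{tot}}(\mathcal{G}_{\text{red}}) = k\,\text{trace}(C^{\intercal}L_{\text{reg},\beta}^{-1})$, so that $k$ cancels, and your explicit check that $R_{\text{tot}}(\mathcal{G}_{\text{red}}) = k\,\text{trace}(L_{\text{red}}^{\dagger})$ via $\mathbf{1}^{\intercal}L_{\text{red}}^{\dagger}\mathbf{1}=0$ fills in a step the appendix only asserts.
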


For fixed $d$, $k$, and $\sigma$, $\mathbb{E}_{u_0}\left[\|\widetilde{w}\|_2^2\right]$ can be minimized by minimizing the total effective resistance of $\mathcal{G}_{\text{red}}$ or the trace of $C^{\intercal}L_{\text{reg}, \beta}^{-1}$. We note that while the norm of the transient component derived in Section~\ref{sec:problem_formulation} is defined on the Kron-reduced graph $\mathcal{G}_{\text{red}}$, directly minimizing this term over this reduced network does not map to the actual edges on the network $\mathcal{G}$ in any meaningful way; topological/structural information about the original network is lost following network reduction \cite{ayazifar_graph_nodate}, \cite{dorfler_synchronization_2010}.
\begin{lemma}
    For a connected graph $\mathcal{G}$, the regularized laplacian $L_{\text{reg}, \beta}$ is positive definite for all $\beta > 0$.
    \label{L_reg_positive_definite}
\end{lemma}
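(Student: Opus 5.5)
The plan is to read positive definiteness directly off the spectral decomposition used in the proof of Lemma~\ref{r_ij_equality_pinvL_invL}, namely $L_{\text{reg}, \beta} = \beta v_1 v_1^{\intercal} + \sum_{i=2}^n \lambda_i v_i v_i^{\intercal}$. Here $\{v_i\}_{i=1}^n$ is an orthonormal eigenbasis of the symmetric matrix $L$, with $v_1 = \frac{1}{\sqrt{n}}\mathbf{1}$. Since $L$ is symmetric, such a basis exists by the spectral theorem. The matrix $L_{\text{reg}, \beta}$ is then symmetric as a sum of symmetric rank-one terms, and its eigenpairs are exactly $(\beta, v_1)$ and $(\lambda_i, v_i)$ for $i = 2, \ldots, n$.

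The key step is to check that every eigenvalue is strictly positive. We have $\beta > 0$ by hypothesis. For the remaining eigenvalues, first note that $L \succeq 0$, because $x^{\intercal} L x = x^{\intercal} B W B^{\intercal} x = \sum_{\{i,j\} \in \mathcal{E}} a_{ij}(x_i - x_j)^2 \geq 0$ using $a_{ij} \geq 0$, so $\lambda_i \geq 0$. Second, connectivity of $\mathcal{G}$ gives $\text{Ker}(L) = \text{Span}(\mathbf{1})$ with algebraic multiplicity one, as recorded in Section~\ref{subsec:notation}. Hence $0 = \lambda_1 < \lambda_2 \leq \cdots \leq \lambda_n$.

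To conclude, take any $x \neq \mathbf{0}$ and write $x = \sum_i c_i v_i$ with $c = V^{\intercal} x \neq \mathbf{0}$. Then
\begin{align*}
x^{\intercal} L_{\text{reg}, \beta} x = \beta c_1^2 + \sum_{i=2}^n \lambda_i c_i^2 \geq \min\{\beta, \lambda_2\}\, \|x\|^2 > 0,
\end{align*}
so $L_{\text{reg}, \beta} \succ 0$. Equivalently, we can note that $L_{\text{reg}, \beta} = L + \frac{\beta}{n}\mathbf{1}\mathbf{1}^{\intercal}$, which is the sum of two positive semidefinite matrices whose kernels, $\text{Span}(\mathbf{1})$ and $\mathbf{1}^{\perp}$, intersect only in $\mathbf{0}$.

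The only real point of care is the strict inequality $\lambda_2 > 0$, which is exactly where connectivity enters. I would cite the kernel characterization already stated in the notation section rather than reprove it. If a self-contained argument is wanted, observe that $x^{\intercal} L x = 0$ forces $x_i = x_j$ across every edge with $a_{ij} > 0$, hence $x \in \text{Span}(\mathbf{1})$ on a connected graph. The remaining steps are routine.
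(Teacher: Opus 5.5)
Your proposal is correct and follows the paper's proof: both write $L_{\text{reg},\beta} = L + \frac{\beta}{n}\mathbf{1}\mathbf{1}^{\intercal} = \beta v_1 v_1^{\intercal} + \sum_{i \geq 2} \lambda_i v_i v_i^{\intercal}$, read off $\text{Spec}(L_{\text{reg},\beta}) = \{\beta, \lambda_2, \ldots, \lambda_n\}$, and use connectivity to get $\lambda_2 > 0$. Your explicit check that $L \succeq 0$ via $\sum a_{ij}(x_i - x_j)^2$ and your kernel-intersection remark are valid extras that the paper takes for granted.
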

\begin{proof}
    The graph $\mathcal{G}$ is connected implies that the algebraic connectivity, $\lambda_2 > 0$
    and the eigenvalues of $L$ can be ordered as $0 = \lambda_1 < \lambda_2 \leq \cdots \leq \lambda_n$. By spectral theorem of symmetric matrices, $L$ has spectral decomposition
    \begin{align*}
        L 
        = \sum_{i = 1}^n\lambda_i v_i v_i^{\intercal} = \lambda_1 v_1 v_1^{\intercal} +  \sum_{i = 2}^n \lambda_i v_i v_i^{\intercal},\\
        \text{where } v_1 = \frac{1}{\sqrt{n}}\mathbf{1}.
    \end{align*}
    For $\beta \neq 0$, the regularized Laplacian is defined as
    \begin{align*}
        L_{\text{reg}, \beta} &= L + \beta\frac{1}{n} \mathbf{1}\mathbf{1}^{\intercal} 
        = (\lambda_1 + \beta) v_1 v_1^{\intercal} +  \sum_{i = 2}^n \lambda_i v_i v_i^{\intercal}\\
        &= \beta v_1 v_1^{\intercal} +  \sum_{i = 2}^n \lambda_i v_i v_i^{\intercal}, \quad \text{since }\lambda_1 = 0.
    \end{align*}
    Notice that $L_{\text{reg}, \beta}$ is symmetric, and $\text{Spec}(L_{\text{reg}, \beta}) = \{\beta, \lambda_2, \cdots, \lambda_n\}$ is the multi-set of real eigenvalues, since $0<\lambda_2\leq\ldots \leq \lambda_n$, it follows that $L_{\text{reg}, \beta}$ is positive definite for any $\beta > 0$.
\end{proof}
\begin{proposition}
    The matrix $C$ is an orthogonal projector on to $\left\{x \;\left| \; \mathbf{1}_{\mathcal{V}_G}^{\intercal} x = 0,\right.\; x_j = 0 \;\forall j \notin \mathcal{V}_G\right\}$.
    \label{C_is_projector}
\end{proposition}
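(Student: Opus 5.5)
We verify the defining properties of an orthogonal projector directly: $C$ is symmetric, idempotent, and its range equals the stated subspace $S := \{x \mid \mathbf{1}_{\mathcal{V}_G}^{\intercal}x = 0,\ x_j = 0\ \forall j \notin \mathcal{V}_G\}$. The basic identities are those of the matrix $E$ from the proof of Theorem~\ref{r_tot_linear_in_Lreg_inv}. Its columns are distinct standard basis vectors, so $E^{\intercal}E = I_k$, and $P := EE^{\intercal}$ is the diagonal $0/1$ matrix selecting the coordinates in $\mathcal{V}_G$. Moreover $\mathbf{1}_{\mathcal{V}_G} = E\mathbf{1}_k$, hence $P\mathbf{1}_{\mathcal{V}_G} = \mathbf{1}_{\mathcal{V}_G}$ and $\mathbf{1}_{\mathcal{V}_G}^{\intercal}\mathbf{1}_{\mathcal{V}_G} = k$.

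\textbf{Symmetry and idempotence.} Symmetry is immediate, since both $EE^{\intercal}$ and $\mathbf{1}_{\mathcal{V}_G}\mathbf{1}_{\mathcal{V}_G}^{\intercal}$ are symmetric. For idempotence, write $Q := \frac{1}{k}\mathbf{1}_{\mathcal{V}_G}\mathbf{1}_{\mathcal{V}_G}^{\intercal}$, so that $C = P - Q$. Then
\begin{align*}
P^2 = P, \qquad Q^2 = Q, \qquad PQ = QP = Q,
\end{align*}
and therefore $C^2 = P - 2Q + Q = C$.

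A cleaner way to see the same thing is the factorization $C = E\,\Pi_k\,E^{\intercal}$, using $\mathbf{1}_{\mathcal{V}_G} = E\mathbf{1}_k$ and $\Pi_k = I - \frac{1}{k}\mathbf{1}\mathbf{1}^{\intercal}$ from the notation section. Then $C^2 = E\,\Pi_k E^{\intercal}E\,\Pi_k E^{\intercal} = E\,\Pi_k^2 E^{\intercal} = C$, because $E^{\intercal}E = I_k$ and $\Pi_k$ is a projector.

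\textbf{Range.} We show $\mathrm{Range}(C) = S$ in two inclusions.
\begin{itemize}
\item $\mathrm{Range}(C) \subseteq S$: take $x = Cy$. Every entry of $x$ outside $\mathcal{V}_G$ is zero, because $P$ and $Q$ both have zero rows there. Also $\mathbf{1}_{\mathcal{V}_G}^{\intercal}C = \mathbf{1}_{\mathcal{V}_G}^{\intercal} - \mathbf{1}_{\mathcal{V}_G}^{\intercal} = 0$, so $\mathbf{1}_{\mathcal{V}_G}^{\intercal}x = 0$.
\item $S \subseteq \mathrm{Range}(C)$: take $x \in S$. Since $x$ is supported on $\mathcal{V}_G$, $Px = x$, and since $\mathbf{1}_{\mathcal{V}_G}^{\intercal}x = 0$, $Qx = 0$. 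Hence $Cx = x$.
\end{itemize}
Thus $C$ is the symmetric idempotent with range $S$, i.e.\ the orthogonal projector onto $S$. Equivalently, $C$ acts as $\Pi_k$ on the $\mathcal{V}_G$ coordinates and annihilates the rest.

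The only step needing care is the mixed product $PQ = Q$, which relies on $\mathbf{1}_{\mathcal{V}_G}$ being supported on $\mathcal{V}_G$. Beyond that the argument is routine bookkeeping. We also note that the factorization $C = E\,\Pi_k\,E^{\intercal}$ makes $\operatorname{trace}(C^{\intercal}L_{\mathrm{reg},\beta}^{-1})$ transparent, which is presumably how the proposition is used later.
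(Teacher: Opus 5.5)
Your proof is correct and follows the same route as the paper. Symmetry is immediate, and idempotence comes from expanding $C^2$ using $E^{\intercal}E = I$ and $EE^{\intercal}\mathbf{1}_{\mathcal{V}_G} = \mathbf{1}_{\mathcal{V}_G}$, which is your $PQ = QP = Q$. Your two-inclusion check that $\mathrm{Range}(C)$ equals the stated subspace, together with the factorization $C = E\,\Pi_k\,E^{\intercal}$, supplies the step the paper only asserts (``we can verify that it's a projector onto'' the subspace), so nothing is missing.
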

\begin{proof}
    see Appendix~\ref{appendix:c_is_projector}.
\end{proof}
\noindent
This implies that $C$ is symmetric positive semidefinite, using a congruence transformation of $Y$ and the cyclic property of trace,
\begin{align}
    \text{trace}(C^{\intercal}Y) = \text{trace}(C^{1/2}YC^{1/2}) \geq 0.
    \label{eqn:trace_c_L_reg}
\end{align}
for any $Y \succeq 0$.

For the graph $\mathcal{G}$, let $\mathcal{E}_c \subseteq \mathcal{V} \times \mathcal{V}$ denote the set of candidate edges considered in our formulation, and let $B_c \in \mathbb{R}^{n \times |\mathcal{E}_c|}$ denote the corresponding incidence matrix. We define a design vector $x \in \mathbb{R}^{|\mathcal{E}_c|}_{\geq 0}$, whose entries are the updates to the weights of the candidate edges in $\mathcal{E}_c$. The Laplacian matrix $L(x) = BWB^{\intercal} + B_c XB_c^{\intercal},$ where $W = \text{diag}(\{a_{ij}\}_{\{i,j\} \in \mathcal{E}})$ and $X = \text{diag}(x)$ is the optimization variable for our problem and we denote the corresponding graph by $\mathcal{G}(x) = (\mathcal{V}, \mathcal{E}, A(x))$. We consider the set of candidate edges $\mathcal{E}_c = \mathcal{E}$ and a budget of edge weights $\alpha \ge 0$. Accordingly, the problem formulation seeks an optimal allocation of weights over the existing edge set $\mathcal{E}$, without altering the underlying network topology. We note, however, that the formulations developed in this section, apply more generally to any set of candidate edges $\mathcal{E}_c$, not necessarily equal to $\mathcal{E}$. It is easy to verify that the matrix variable $L(x)$ is an irreducible positive semidefinite matrix for any edge set $\mathcal{E}_c \subseteq \mathcal{V} \times \mathcal{V}$ and $x \geq 0$, if $L$ is irreducible positive semidefinite.

\begin{theorem}
    For any $\mathcal{E}_c \subseteq \mathcal{V} \times \mathcal{V}$ and $\mathcal{V}_G \subset \mathcal{V}$, $R_{\text{tot}}(\mathcal{G}_{\text{red}}(x))$ is a convex function of $x \in \mathbb{R}^{|\mathcal{E}_c|}_{\geq 0}$.
    \label{convexity_of_r_tot}
\end{theorem}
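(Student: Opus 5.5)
Fix $\beta>0$ and write $L_{\text{reg},\beta}(x) = L(x) + \frac{\beta}{n}\mathbf{1}\mathbf{1}^{\intercal}$. The argument has three steps: the map $x \mapsto L_{\text{reg},\beta}(x)$ is affine, matrix inversion is operator convex on positive definite matrices, and a nonnegative linear functional of a matrix-convex function is convex.

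\textbf{Step 1: affine dependence and positive definiteness.} By Theorem~\ref{r_tot_linear_in_Lreg_inv} and Lemma~\ref{r_ij_equality_pinvL_invL},
\begin{align*}
R_{\text{tot}}(\mathcal{G}_{\text{red}}(x)) = k\,\text{trace}\big(C\,L_{\text{reg},\beta}(x)^{-1}\big).
\end{align*}
The boundary set $\mathcal{V}_G$ is fixed, so $C$ does not depend on $x$. The map
\begin{align*}
x \mapsto L_{\text{reg},\beta}(x) = BWB^{\intercal} + \sum_{l\in\mathcal{E}_c} x_l\, b_l b_l^{\intercal} + \frac{\beta}{n}\mathbf{1}\mathbf{1}^{\intercal}
\end{align*}
is affine in $x$, where $b_l$ are the columns of $B_c$. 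For every $x \succeq \mathbf{0}$, $L(x)$ is the Laplacian of a connected graph, since adding nonnegative weights to a connected graph keeps it connected. Lemma~\ref{L_reg_positive_definite} then gives $L_{\text{reg},\beta}(x) \succ 0$. So the affine map sends the convex domain $\mathbb{R}^{|\mathcal{E}_c|}_{\geq 0}$ into the open cone of positive definite matrices.

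\textbf{Step 2: operator convexity of the inverse.} This is the main step. For $Y_0, Y_1 \succ 0$ and $\theta\in[0,1]$ we need
\begin{align*}
\big(\theta Y_1 + (1-\theta)Y_0\big)^{-1} \preceq \theta Y_1^{-1} + (1-\theta) Y_0^{-1}.
\end{align*}
The cleanest route is the Schur complement epigraph characterization: for $Y \succ 0$,
\begin{align*}
Z \succeq Y^{-1} \iff \begin{bmatrix} Y & I\\ I & Z\end{bmatrix} \succeq 0.
\end{align*}
The set of pairs $(Y,Z)$ satisfying this LMI is convex. Averaging the two block matrices built from $(Y_0,Y_0^{-1})$ and $(Y_1,Y_1^{-1})$ gives another feasible pair, which yields the inequality above. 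An alternative proof is to congruence-transform by $Y_0^{-1/2}$, reduce to commuting matrices, and apply convexity of $t\mapsto 1/t$ on $t>0$.

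\textbf{Step 3: composition.} Write $Y_\theta = L_{\text{reg},\beta}(\theta x_1 + (1-\theta)x_0) = \theta Y_1 + (1-\theta)Y_0$. By Proposition~\ref{C_is_projector}, $C \succeq 0$. The functional $Y\mapsto \text{trace}(CY) = \text{trace}(C^{1/2}YC^{1/2})$ is therefore linear and monotone with respect to $\succeq$, as in (\ref{eqn:trace_c_L_reg}). Applying it to the matrix inequality of Step 2 and multiplying by $k>0$ gives
\begin{align*}
R_{\text{tot}}(\mathcal{G}_{\text{red}}(\theta x_1+(1-\theta)x_0)) \le \theta R_{\text{tot}}(\mathcal{G}_{\text{red}}(x_1)) + (1-\theta)R_{\text{tot}}(\mathcal{G}_{\text{red}}(x_0)).
\end{align*}
This is convexity on $\mathbb{R}^{|\mathcal{E}_c|}_{\geq 0}$.

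The same Schur complement LMI yields the SDP reformulation directly: minimize $k\,\text{trace}(CZ)$ subject to $\begin{bmatrix} L_{\text{reg},\beta}(x) & I\\ I & Z\end{bmatrix}\succeq 0$.
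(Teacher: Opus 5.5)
Your proof is correct and follows essentially the paper's route. The paper composes the convex matrix-fractional function $Y\mapsto z^{\intercal}Y^{-1}z$ (with $z=e_i-e_j$) with the affine map $x\mapsto L(x)_{\text{reg},\beta}$ and then sums over generator pairs; this is the same as your $k\,\text{trace}(C\,L_{\text{reg},\beta}(x)^{-1})$ with $C\succeq 0$, since $2kC=\sum_{i,j\in\mathcal{V}_G}(e_i-e_j)(e_i-e_j)^{\intercal}$, and your Schur-complement argument for operator convexity of the inverse simply makes explicit the fact the paper cites from Boyd and Vandenberghe.
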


\begin{proof}
    For any $z\in \mathbb{R}^n$, the function $f(L(x)_{\text{reg}, \beta}) = z^{\intercal}L(x)_{\text{reg}, \beta}^{-1}z$, is a convex, non-increasing function of the matrix variable $L(x)_{\text{reg}, \beta}$ if $L(x)_{\text{reg}, \beta} \succ 0$.\\
    By definition, $L(x)_{\text{reg}, \beta} = BWB^{\intercal} + B_{c}XB_c^{\intercal} + \frac{1}{\beta} \mathbf{11}^{\intercal}$ is an affine function of $x$ and from Lemma~\ref{L_reg_positive_definite}, we know that for $\beta >0$ and $x \geq 0$, $L(x)_{\text{reg}, \beta} \succ 0$. It follows from the composition of a convex non-increasing function with an affine function that $f(L(x)_{\text{reg}, \beta})$ is a convex function of $x$ (for any fixed $z$); section 3.2.4\cite{boyd_convex_2023}.
    
    Hence, by Lemma~\ref{r_ij_equality_pinvL_invL} the effective resistance $r_{ij}^{\text{eff}}(\mathcal{G}_{\text{red}}(x)) = (e_i - e_j)^{\intercal}L(x)^{-1}_{\text{reg}, \beta}(e_i - e_j)$ is a convex function for $x \geq 0$ and any $i, j \in \mathcal{V}$. Convexity of $R_{\text{tot}}(\mathcal{G}_{\text{red}}(x))$ follows by its definition as a non-negative sum of convex functions.
\end{proof}

We remark that the convexity result in Theorem~\ref{convexity_of_r_tot} simply requires that the regularized Laplacian remains positive definite over some domain of $x$, that is, convexity holds if we can directly constrain the non-zero eigenvalues of $L(x)$. It follows from Theorem~\ref{convexity_of_r_tot} that the optimization problem
\begin{align*}
        P_0: \quad \operatorname*{minimize}_{x} \quad & \text{trace}(C^{\intercal}L(x)_{\text{reg}, \beta}^{-1})\\
        \text{subject to} \quad &\alpha \geq x^{\intercal}\mathbf{1} , \; x \geq 0, \; \beta>0
\end{align*}
is convex for any $\alpha \geq 0$.

From the perspective of distances on the graph, the solution to $P_0$ determines $x^{\star}$ in the scaled-simplex that minimizes the sum of the squared ``electrical distances," $d(i,j)^2$ between the generator nodes. Introducing an auxiliary matrix variable $Y$ and a constraint $Y \succeq L(x)_{\text{reg}, \beta}^{-1}$, we can reformulate $P_0$ as a standard semidefinite program in the matrix variables $X$ and $Y$ as follows. 
Applying the Schur complement property for positive definite matrices \cite{boyd_linear_1994},
\begin{align*}
    Y \succeq L(x)_{\text{reg}, \beta}^{-1} \iff \begin{bmatrix}
        L(x)_{\text{reg}, \beta} & I \\ I & Y
    \end{bmatrix} \succeq 0.
\end{align*}
Therefore solving $P_0$ is equivalent to solving the following semidefinite program
\begin{align*}
    P_1: \quad  \operatorname*{minimize}_{X, Y} \quad & \text{trace}(C^{\intercal}Y)\\
    \text{subject to} \quad & \begin{bmatrix}
        L(x)_{\text{reg}, \beta} & I \\ I & Y
    \end{bmatrix} \succeq 0\\
    &\alpha \geq \mathbf{1}^{\intercal}X\mathbf{1},\\
    &X \succeq 0, \; \beta > 0.
\end{align*}

\begin{proposition}
    Given two connected graphs $\mathcal{G}$ and $\tilde{\mathcal{G}}$ defined on an identical set of nodes with possibly varying edges and edge weights, if $\tilde{a}_{ij} \geq a_{ij}$ for all $i, j \in \{1, \ldots, n\}$ then $\tilde{r}_{ij}^{\text{eff}} \leq r_{ij}^{\text{eff}}$ for all $i, j \in \{1, \ldots, n\}$.
    \label{r_ij_rayleigh_monotonicity}
\end{proposition}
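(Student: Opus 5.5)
The plan is to express both effective resistances through regularized Laplacians, as in Lemma~\ref{r_ij_equality_pinvL_invL}, and then compare those matrices in the Loewner order. Fix $\beta>0$ and write $L_{\text{reg},\beta} = L + \frac{\beta}{n}\mathbf{1}\mathbf{1}^{\intercal}$ and $\tilde L_{\text{reg},\beta} = \tilde L + \frac{\beta}{n}\mathbf{1}\mathbf{1}^{\intercal}$, with the same $\beta$ for both graphs. Both graphs are connected, so Lemma~\ref{L_reg_positive_definite} makes both matrices positive definite. By Lemma~\ref{r_ij_equality_pinvL_invL},
\begin{align*}
r_{ij}^{\text{eff}} = (e_i-e_j)^{\intercal}L_{\text{reg},\beta}^{-1}(e_i-e_j), \qquad \tilde r_{ij}^{\text{eff}} = (e_i-e_j)^{\intercal}\tilde L_{\text{reg},\beta}^{-1}(e_i-e_j).
\end{align*}
It therefore suffices to show $\tilde L_{\text{reg},\beta}^{-1} \preceq L_{\text{reg},\beta}^{-1}$.

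First I will show $\tilde L \succeq L$. Let $\Delta A = \tilde A - A$. By hypothesis this matrix is symmetric with nonnegative entries, and $\tilde L - L$ is exactly the Laplacian of the graph whose edge weights are $\tilde a_{ij}-a_{ij}\ge 0$. For any $z\in\mathbb{R}^n$,
\begin{align*}
z^{\intercal}(\tilde L - L)z = \frac{1}{2}\sum_{i,j}(\tilde a_{ij}-a_{ij})(z_i-z_j)^2 \ge 0.
\end{align*}
Equivalently, $\tilde L - L = B_{\Delta}W_{\Delta}B_{\Delta}^{\intercal}$ with $W_{\Delta}\succeq 0$, which is the same structure as $L(x)$ in the paper. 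The rank-one regularization terms are identical and cancel, so $\tilde L_{\text{reg},\beta} \succeq L_{\text{reg},\beta} \succ 0$.

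The remaining step is the one that needs some care: the operator antitonicity of matrix inversion on positive definite matrices, i.e. $P \succeq Q \succ 0$ implies $Q^{-1}\succeq P^{-1}$. I will prove it by congruence. Set $S = Q^{-1/2}PQ^{-1/2}$. Then $S \succeq I$, so every eigenvalue of $S$ is at least $1$. Hence every eigenvalue of $S^{-1}$ is at most $1$, i.e. $S^{-1}\preceq I$. Congruence by $Q^{-1/2}$ preserves the Loewner order, and $Q^{-1/2}S^{-1}Q^{-1/2} = P^{-1}$, so $P^{-1}\preceq Q^{-1}$.

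This is the same non-increasing property of $Y\mapsto z^{\intercal}Y^{-1}z$ already used in the proof of Theorem~\ref{convexity_of_r_tot}. An alternative is the Schur complement characterization used for $P_1$: the block matrix $\begin{bmatrix} P & I\\ I & Q^{-1}\end{bmatrix}$ is positive semidefinite because $\begin{bmatrix} Q & I\\ I & Q^{-1}\end{bmatrix}\succeq 0$ and $P \succeq Q$.

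Applying the inequality with $P=\tilde L_{\text{reg},\beta}$, $Q=L_{\text{reg},\beta}$ and $z=e_i-e_j$ gives $\tilde r_{ij}^{\text{eff}}\le r_{ij}^{\text{eff}}$ for every pair $i,j$. This is Rayleigh monotonicity. By Lemma~\ref{invariance_reff_ij} the same conclusion carries over to the Kron-reduced graphs.
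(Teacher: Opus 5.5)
Your proof is correct, but it is not the paper's route. The paper gives no argument of its own; it simply invokes the classical Rayleigh monotonicity law (Proposition 5.5 of \cite{dorfler_electrical_2018}). You instead derive the result from the paper's own tools: Lemma~\ref{r_ij_equality_pinvL_invL} and Lemma~\ref{L_reg_positive_definite}, the observation that $\tilde L - L$ is itself a weighted Laplacian, and Loewner antitonicity of inversion.

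Regularizing both graphs with the same $\beta$ is the right choice. The rank-one terms cancel, so you work with true inverses of positive definite matrices. This avoids the pseudoinverse, for which $P \succeq Q \succeq 0$ does not in general give $Q^{\dagger} \succeq P^{\dagger}$; it holds here only because both kernels are $\text{Span}(\mathbf{1})$.

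Your route also gives more than the citation does:
\begin{itemize}
\item The matrix inequality $\tilde L_{\text{reg},\beta}^{-1} \preceq L_{\text{reg},\beta}^{-1}$, together with $C \succeq 0$ (Proposition~\ref{C_is_projector}), yields monotonicity of $\text{trace}(C^{\intercal}L(x)_{\text{reg},\beta}^{-1})$ in $x$ in one line. The paper reaches this conclusion after the proposition only by combining pairwise resistances with Lemma~\ref{invariance_reff_ij}.
\item It justifies the non-increasing property of $Y \mapsto z^{\intercal}Y^{-1}z$ that the proof of Theorem~\ref{convexity_of_r_tot} takes for granted.
\end{itemize}

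The citation, for its part, is brief and appeals to a standard result whose usual proofs, via Thomson's or Dirichlet's energy principle, are physically transparent. The cost is that the paper's monotonicity arguments are less self-contained.
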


\begin{proof}
    Rayleigh monotonicity law, see Proposition 5.5 Florian \textit{et al.}
    \cite{dorfler_electrical_2018}.
\end{proof}

By Proposition~\ref{r_ij_rayleigh_monotonicity}, for any
$x, y \in \mathbb{R}^n_{\geq 0}$ satisfying
$y \succeq x$, it holds that
\begin{align*}
    r_{ij}^{\mathrm{eff}}(\mathcal{G}(y))
    \leq
    r_{ij}^{\mathrm{eff}}(\mathcal{G}(x)),
    \qquad
    \forall i,j \in \{1,\dots,n\}.
\end{align*}
Using Lemma~\ref{invariance_reff_ij}, the total effective resistance of
$\mathcal{G}_{\mathrm{red}}(x)$ is monotonically non-increasing
with respect to $x$ and $\alpha$. Consequently, the solution to
$P_0$ always lies on the boundary of the feasible set.

\subsection{Ensuring Steady-State Phase Cohesion}
\label{subsec:gamma_cohesiveness}

Beyond minimizing the norm of the transient component, 
in this section, we formulate an LMI that ensures that for a set of net power injections, the frequency-synchronized state $(\delta^*, \mathbf{0})$ on the graph $\mathcal{G}(x)$ is stable with steady-state angles that achieve $\gamma$-cohesion. The state $(\delta^*, \mathbf{0})$ is $\gamma$-cohesive if the geodesic distance between the steady-state phase angles satisfies $|\delta_i^* - \delta^*_j| \leq \gamma$ rad for every edge $\{i, j\} \in \mathcal{E}$. In power networks phase-cohesion is often required to ensure semi-global tracking of desired power injections\cite{zholbaryssov_distributed_2018}.

\begin{theorem}
    For $\gamma \in (0, \pi/2)$ and $\psi \geq 0$, if the intersection of the convex set defined by the LMI
    \begin{align*}
        L(x) \succeq \frac{1}{\sin(\gamma)} \|B\|_2\psi \Pi_n, \quad \Pi_n = I - \frac{1}{n}\mathbf{1}\mathbf{1}^{\intercal}
    \end{align*} and the scaled-simplex $\mathcal{X}_{\alpha}$ is non-empty, then there exists a unique, exponentially stable $\gamma$-cohesive synchronized solution on the optimized network $\mathcal{G}(x^{\star})$ for all power injections
    \begin{align*}
        p \in \mathcal{P}_{\psi} = \{p \in \mathbf{1}^{\perp} \; | \; \|p\|_{2} \leq \psi\}.
    \end{align*}
    \label{existence_of_gamma_cohesive_solution}
\end{theorem}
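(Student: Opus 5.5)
Since $L(x)$ and $\Pi_n$ are both constant matrices once $x$ is fixed, the plan is a fixed-point argument on the steady-state power flow equations, in the style of Dörfler--Chertkov--Bullo. The steady state of (\ref{eqn:generator_model})--(\ref{eqn:load_model}) with $\dot\delta=\mathbf{0}$ is $p = B W(x)\sin(B^{\intercal}\delta)$, where $W(x)$ collects the optimized edge weights and $\sin$ acts entrywise. For flows on the edges, first write the linear (DC) solution $\delta_{\text{lin}} = L(x)^{\dagger}p$, with edge differences $B^{\intercal}L(x)^{\dagger}p$. Take $x^\star$ in the non-empty intersection, so that $L(x^\star) \succeq \frac{\|B\|_2\psi}{\sin\gamma}\Pi_n$. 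Since $L(x^\star)$ and $\Pi_n$ share the kernel $\mathrm{Span}(\mathbf{1})$, this gives $\lambda_2(L(x^\star)) \geq \|B\|_2\psi/\sin\gamma$, and hence $\|L(x^\star)^{\dagger}\|_2 \leq \sin\gamma/(\|B\|_2\psi)$. For $p\in\mathcal{P}_\psi$ we then get
\begin{align*}
\|B^{\intercal}L(x^\star)^{\dagger}p\|_\infty \leq \|B^{\intercal}L(x^\star)^{\dagger}p\|_2 \leq \|B\|_2\,\|L(x^\star)^{\dagger}\|_2\,\psi \leq \sin\gamma .
\end{align*}
So the linearized edge differences are bounded by $\sin\gamma$ in $\infty$-norm.

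The second step uses the known criterion (Dörfler, Chertkov, Bullo, "Synchronization in complex oscillator networks and smart grids"): if $\|B^{\intercal}L^{\dagger}p\|_\infty \le \sin\gamma$, then there is a unique synchronized solution with $|\delta_i-\delta_j|\le\gamma$ on all edges. The proof should derive this for the general graph, since it is the step I expect to be hardest. I would reproduce it in three moves:
\begin{itemize}
\item Rewrite the fixed point in edge coordinates as $W\sin(B^{\intercal}\delta) = W B^{\intercal}L^{\dagger}p + \xi$, where $\xi\in\mathrm{Ker}(B)$ accounts for cycle flows.
\item Observe that on a tree $\xi=0$, so $\sin(B^{\intercal}\delta) = B^{\intercal}L^{\dagger}p$, and the bound gives $\gamma$-cohesion directly via $\arcsin$.
\item On a meshed graph, handle the cycle-flow freedom with a Brouwer fixed-point argument on the compact set $\{\delta\in\mathbf{1}^{\perp} : \|B^{\intercal}\delta\|_\infty\le\gamma\}$, using that $\sin$ is monotone on $[-\gamma,\gamma]$.
\end{itemize}
If the general-graph version proves too delicate, my first fallback is to strengthen the estimate with $\|B\|_2$: this is exactly why $\|B\|_2$ appears in the LMI. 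The fallback works in $2$-norms, showing that the map $\delta\mapsto L^{\dagger}(p - BW(\sin(B^{\intercal}\delta)-B^{\intercal}\delta))$ sends the set above into itself. Uniqueness would then follow from strict monotonicity of the flow map within the $\gamma<\pi/2$ region.

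For exponential stability, linearize at $(\delta^*,\mathbf{0})$. The Jacobian coupling is the Laplacian with weights $a_{ij}\cos(\delta_i^*-\delta_j^*)$, and these weights are positive because $|\delta_i^*-\delta_j^*|\le\gamma<\pi/2$. That Laplacian is therefore connected, and after Kron reduction it is irreducible by Proposition~\ref{L_red_is_laplacian}. With $M,D\succ 0$ on the generator block, the reduced second-order system (\ref{eqn:reduced_linerized_model}) is exponentially stable modulo the rigid rotation along $\mathbf{1}$, by the modal decomposition $h_i(s)$ of Section~\ref{sec:problem_formulation}. Finally, since the hypothesis is stated for $x^\star$ in the intersection of the LMI set with $\mathcal{X}_\alpha$, the conclusion applies on $\mathcal{G}(x^\star)$ for every $p\in\mathcal{P}_\psi$.
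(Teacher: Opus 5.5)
Your first step is exactly the paper's argument. The LMI, together with the shared kernel $\mathrm{Span}(\mathbf{1})$, gives $\lambda_2(L(x^\star))\ge \|B\|_2\psi/\sin\gamma$, and Appendix~C turns this into $\|B^{\intercal}L(x^\star)^{\dagger}p\|_\infty\le\sin\gamma$. Your uniqueness step (strict monotonicity of $\delta\mapsto BW\sin(B^{\intercal}\delta)$ on the convex set $\{\delta\in\mathbf{1}^{\perp}:\|B^{\intercal}\delta\|_\infty\le\gamma\}$) and your stability step (positive weights $a_{ij}\cos(\delta_i^*-\delta_j^*)$, Kron reduction, modal decomposition) are also fine.

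The gap is your second step. The paper does not prove that $\|B^{\intercal}L^{\dagger}p\|_\infty\le\sin\gamma$ yields a $\gamma$-cohesive equilibrium; it imports this from D\"orfler--Chertkov--Bullo. There it is established only for special topologies (acyclic graphs and some homogeneous/symmetric families). On general meshed graphs with non-uniform weights it is only an approximate test and is not a valid sufficient condition. So your Brouwer argument on the box $\{\delta\in\mathbf{1}^{\perp}:\|B^{\intercal}\delta\|_\infty\le\gamma\}$ cannot be completed from that bound, and monotonicity of $\sin$ does not control the cycle-flow term $\xi$.

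Concretely, with $y=B^{\intercal}\delta$, your fallback map satisfies $B^{\intercal}T(\delta)=B^{\intercal}L^{\dagger}p+P(y-\sin y)$, where $P=B^{\intercal}L^{\dagger}BW$ is the $W$-orthogonal projection onto the cut space. The estimate over the box closes only if $\|P\|_\infty\le 1$. That holds on trees, where $P=I$, but fails once there are cycles; already for a uniform $n$-cycle, $\|P\|_\infty=2(n-1)/n$. Switching to 2-norms does not rescue it. Over the box, $\|y-\sin y\|_2$ can be of order $\sqrt{m}\,(\gamma-\sin\gamma)$. Even on the ball $\{\|B^{\intercal}\delta\|_2\le\gamma\}$ you would need $\|P\|_2=1$, i.e.\ $P$ symmetric, which fails for generic non-uniform weights.

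The missing idea is to bypass the edge-space criterion and use $\lambda_2$ directly in node coordinates. Let $F(\delta)=p-BW\sin(B^{\intercal}\delta)$, which maps $\mathbf{1}^{\perp}$ into itself. If $\delta\in\mathbf{1}^{\perp}$ and $\|\delta\|_2=\gamma/\sqrt{2}$, then $|\delta_i-\delta_j|=|(e_i-e_j)^{\intercal}\delta|\le\gamma$ for every pair. Hence $y_e\sin y_e\ge(\sin\gamma/\gamma)\,y_e^2$ on every edge, and with nonnegative weights ($x^\star\in\mathcal{X}_\alpha$),
\begin{align*}
\delta^{\intercal}F(\delta)\le\|\delta\|_2\|p\|_2-\frac{\sin\gamma}{\gamma}\,\delta^{\intercal}L(x^\star)\delta\le\frac{\gamma}{\sqrt{2}}\Big(\|p\|_2-\frac{\lambda_2\sin\gamma}{\sqrt{2}}\Big)\le 0 .
\end{align*}
The last inequality holds because the LMI gives $\lambda_2\sin\gamma\ge\|B\|_2\psi\ge\sqrt{2}\,\|p\|_2$; every column of $B$ has norm $\sqrt{2}$, so $\|B\|_2\ge\sqrt{2}$. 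A continuous field with $\langle F(\delta),\delta\rangle\le 0$ on the sphere has a zero in the ball, by a standard corollary of Brouwer. This gives an equilibrium with all pairwise, hence all edge, differences at most $\gamma$. Your uniqueness and stability arguments then finish the proof without the general-graph D\"orfler--Chertkov--Bullo test.
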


\begin{proof}
    For a net power injection vector $p \in \mathbf{1}^{\perp}$, a unique exponentially stable $\gamma$-cohesive synchronized state $(\delta^*, \mathbf{0})$ exists if the algebraic connectivity of $\mathcal{G}(x)$ satisfies,
    \begin{align}
        \|B\|_2 \|p\|_2 \frac{1}{\sin(\gamma)} \leq \lambda_2,
        \label{eqn:synchronization_condition_2}
    \end{align}
    proof in Appendix~\ref{appendix:sufficient_sync_condition}.
    
    For all $x \in \mathbb{R}^{|\mathcal{E}|}$, the Laplacian $L(x)$ is symmetric and has spectral decomposition $L(x) = \sum_{i = 1}^n \lambda_i v_i v_i^{\intercal}$, where $\lambda_k = 0$ for some $k$ with the corresponding eigenvector $\mathbf{1}$. To satisfy the synchronization condition $(5)$, we constrain the non-zero eigenvalues of the matrix variable $L(x)$. Define a projector on $\mathbf{1}^{\perp}$ using the orthonormal set of eigenvectors $v_i$ as,
    \begin{align*}
        \Pi_{n} = \sum_{i = 2}^n v_i v_i^{\intercal} = I - \frac{1}{n}\mathbf{1}\mathbf{1}^{\intercal}.
    \end{align*} 
     Notice that $\Pi_n \succeq 0$ with eigenvalues $\text{Spec}(\Pi_n) = \{0, 1, 1\ldots, 1\}$. 
    So the inequality (\ref{eqn:synchronization_condition_2}) holds if and only if,
    \begin{align*}
        \lambda_2 \Pi_{n} &\succeq \frac{1}{\sin(\gamma)} \|B\|_2 \|p\|_2 \Pi_n. 
    \end{align*}
    It follows that if the LMI
    \begin{align*}
        L(x) &\succeq 
        \frac{1}{\sin(\gamma)} \|B\|_2 \|p\|_2 \Pi_n,
    \end{align*}
    is feasible, the $n-1$ non-zero eigenvalues of $L(x)$ satisfies the inequality (\ref{eqn:synchronization_condition_2}). Therefore, for all net power injection vectors $p \in \mathcal{P}_{\psi} = \{p \in \mathbf{1}^{\perp} \; | \; \|p\|_{2} \leq \psi\}$ on the graph $\mathcal{G}(x)$, whose Laplacian satisfies
    \begin{align}
        L(x) &\succeq \frac{\psi}{\sin(\gamma)} \|B\|_2  \Pi_n,
        \label{eqn:LMI_1}
    \end{align}
    the network dynamics (\ref{eqn:generator_model}) - (\ref{eqn:load_model}) has a unique and exponentially stable $\gamma$-cohesive synchronized state $(\delta^*, \mathbf{0})$ and if the intersection of the set defined by (\ref{eqn:LMI_1}) and $\mathcal{X}_{\alpha}$ is non-empty, the synchronization condition (\ref{eqn:synchronization_condition_2}) holds for $L(x^{\star})$. This completes the proof.
\end{proof}

We denote the set of $x \in \mathbb{R}^{|\mathcal{E}_c|}$ such that (\ref{eqn:LMI_1}) holds by $\mathcal{X}_{\text{sync}}$, and it is straightforward to verify that $\mathcal{X}_{\text{sync}}$ is a convex set \cite{caverly_lmi_2024}. Since the real power flow on the lossless branch $\{i,j\}$ is given by $p_{ij} = a_{ij}\sin(\delta_i - \delta_j)$, the inequality (\ref{eqn:synchronization_condition_2}) ensures that the edge weights $a_{ij}$ on the graph $\mathcal{G}(x)$ are sufficiently large to guarantee that the angle differences required for the net power injections $p \in \mathcal{P}_{\psi}$, is less than $\gamma$. 
We note that if $\psi > 0$ and $\gamma \in (0, \pi/2)$, the synchronization condition (\ref{eqn:LMI_1}) is sufficient to ensure that the Laplacian $L(x)$ remains irreducible (i.e., $\lambda_2  > 0$ and $\mathcal{G}(x)$ remains connected) when we relax the non-negativity constraint on the design vector $x$.

Solving the following SDP 
\begin{align*}
    P_2: \quad  \operatorname*{minimize}_{X, Y} \quad & \text{trace}(C^{\intercal}Y)\\
    \text{subject to} \quad & \begin{bmatrix}
        L(x)_{\text{reg}, \beta} & I \\ I & Y
    \end{bmatrix} \succeq 0\\
    &L(x) \succeq \frac{\psi}{\sin(\gamma)} \|B\|_2 \Pi_n\\
    &\alpha \geq \mathbf{1}^{\intercal}X\mathbf{1}, \; 
    X \succeq 0, \; 
    \beta > 0
\end{align*}
minimizes transients while ensuring that the steady-state angle cohesiveness requirement is met for the set of power injections $\mathcal{P}_{\psi}$ on the optimal graph $\mathcal{G}(x^{\star})$. While the optimization problem $P_0$ is always feasible for any budget $\alpha \geq 0$, the feasibility of $P_2$ strongly depends on the choice of parameters $\gamma$, and $\psi$. Specifically, for a fixed $\gamma$, the intersection of the simplex $\mathcal{X}_{\alpha}$ and $\mathcal{X}_{\text{sync}}$ shrinks as $\psi$ increases (that is, for larger sets $\mathcal{P}_{\psi}$, it becomes more difficult to find $x \in \mathcal{X}_{\alpha}$ that satisfy the $\gamma$-cohesiveness condition (\ref{eqn:LMI_1})). Consequently, for sufficiently large $\psi$, the problem $P_2$ can be infeasible.

Consider an example of the optimization problem $P_2$ defined for a simple path graph with generator nodes $\{1,2\}$ and load node $\{3\}$ shown in Fig.~\ref{fig:simple_path_graph}. We consider candidate edges $\mathcal{E}_c = \mathcal{E}$, so that the design vector for this problem is two-dimensional with $x_1$ and $x_2$ corresponding to the branches $\{1,2\}$ and $\{2,3\}$, respectively.

\begin{figure}[htpb]
  \centering
  \begin{tikzpicture}[>=stealth, node distance=1.6cm]

\node[circle, draw, red, thick] (n1) at (0,1.5) {1};
\node[circle, draw, red, thick] (n2) at (2.5,1.5) {2};
\node[circle, draw, black, thick] (n3) at (5.0,1.5) {3};

\draw[-, line width=0.5pt] (n1) -- node[midway, above] {2.0} (n2);
\draw[-, line width=2pt] (n2) -- node[midway, above] {5.0} (n3);

\end{tikzpicture}
  \caption{Weighted undirected path graph on $3$ nodes, with generator nodes $\mathcal{V}_G = \{1, 2\}$ and load nodes $\mathcal{V}_L = \{3\}$.}
  \label{fig:simple_path_graph}
\end{figure}

\begin{figure}[htpb]
    \centering
    \begin{minipage}[t]{0.32\linewidth}
        \centering
        \includegraphics[width=\linewidth]{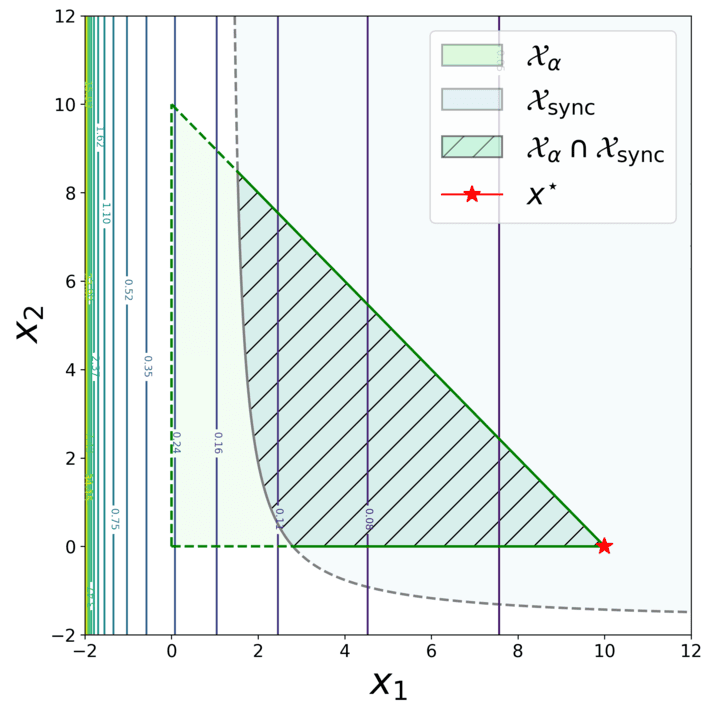}
        (a)
    \end{minipage}
    \hfill
    \begin{minipage}[t]{0.32\linewidth}
        \centering
        \includegraphics[width=\linewidth]{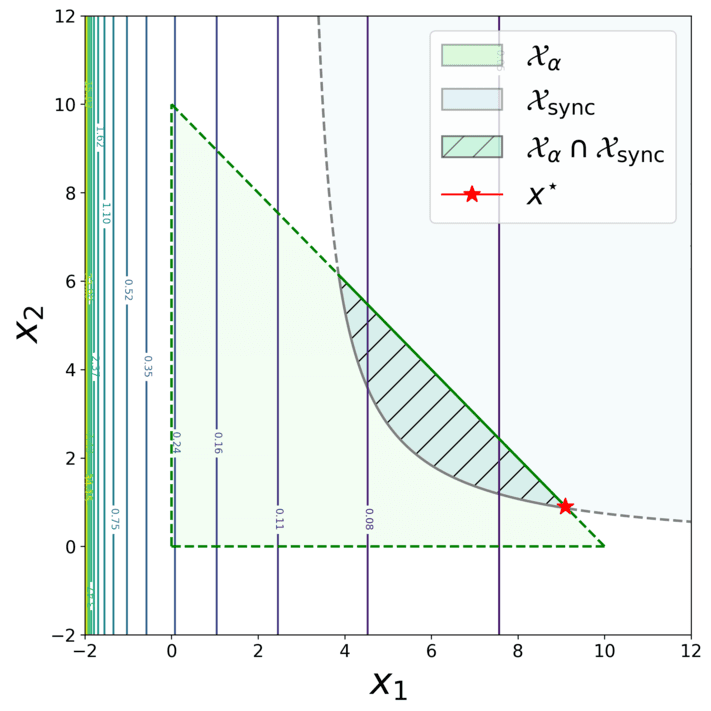}
        (b)
    \end{minipage}
    \hfill
    \begin{minipage}[t]{0.32\linewidth}
        \centering
        \includegraphics[width=\linewidth]{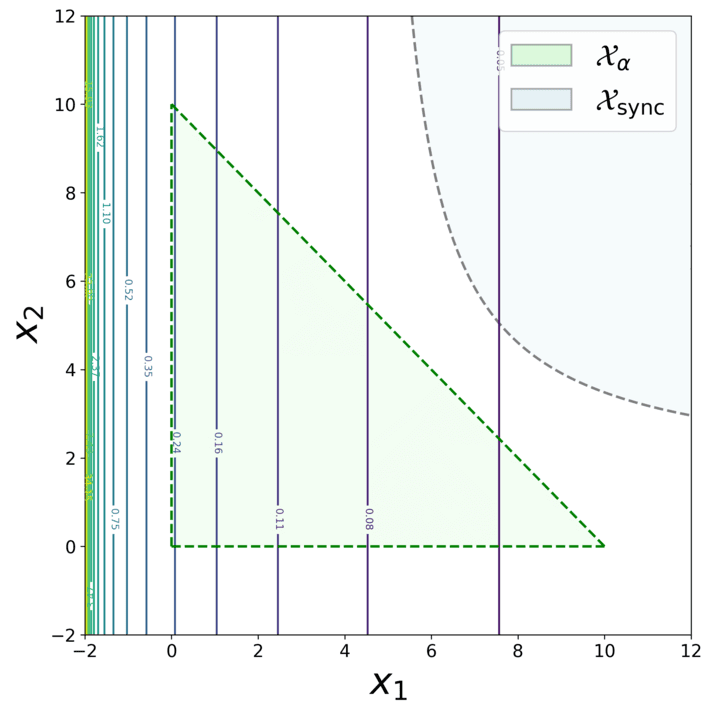}
        (c)
    \end{minipage}
    
    \caption{Level sets of the objective function $f(x) = \text{trace}(C^{\intercal}L_{\text{reg}, \beta}^{-1})$, the scaled-simplex $\mathcal{X}_{\alpha} = \left\{x\in \mathbb{R}_{\geq 0}^{2}\, \middle| \, 10 \geq  x^{\intercal}\mathbf{1}\right\}$, and $\mathcal{X}_{\text{sync}} = \left\{x\, \middle|\, \lambda_2(L(x)) \geq \sqrt{3 }\psi/\sin(\pi/4)\right\}$. (a) Feasible case (trivial solution): $\psi = 2.0$ p.u. (b) Feasible case (non-trivial solution): $\psi = 3.0$ p.u. (c) Infeasible case: $\psi = 4.0$ p.u.}
    \label{fig:simple_path_objective_function_and_feasible_sets}
\end{figure}

The matrix $C = \text{blkdiag}(\Pi_2, 0)$ and for a regularization parameter $\beta = 1$,
$L(x)_{\text{reg}, \beta} = L(x) + \frac{1}{3}\mathbf{1}\mathbf{1}^{\intercal}$. Fig.~\ref{fig:simple_path_objective_function_and_feasible_sets} above shows the level sets of the objective function and the feasible sets for $\alpha = 10$ p.u. and $\gamma = \pi/4$ rad and different values of $\psi$. The level sets of the objective function decay at a rate of $\mathcal{O}(1/x)$ along the $x_1$ direction and remain constant along the $x_2$ direction. Fig.~\ref{fig:simple_path_objective_function_and_feasible_sets}(b) shows that to guarantee $\gamma$-cohesiveness for $\psi = 3.0$ p.u., the constraint (\ref{eqn:LMI_1}) is active at the minimizer $x^{\star}$ and $x_2^{\star} > 0$ even though the branch $\{2,3\}$ does not connect generator nodes. For a fixed value of $\gamma$ and $\alpha \geq 0$, the parameter $\psi$ can be maximized using a one-dimensional binary search. This approach identifies the largest set of net power injections, $\mathcal{P}_{\text{max}}$, for which the optimization problem $P_2$ remains feasible.
A similar argument can be made for a fixed $\psi \geq 0$, while varying the arc length $\gamma \in (0, \pi/2)$.

\begin{corollary}
    A sufficient condition for the existence and uniqueness of a $\gamma$-cohesive synchronized solution $(\delta^*, \mathbf{0})$ for a power injection vector $p$ on the graph $\mathcal{G}(x)$ is that the pair-wise effective conductances satisfies
    \begin{align*}
        \frac{1}{r_{ij}^{\text{eff}}(\mathcal{G}(x))} \geq \frac{1}{2 \sin(\gamma)}\|B\|_2\|p\|_2\quad \forall \, i \neq j.
    \end{align*}
    \label{r_ij_synchronization_condition}
\end{corollary}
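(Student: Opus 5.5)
The plan is to reduce the corollary to the algebraic connectivity condition (\ref{eqn:synchronization_condition_2}) established in the proof of Theorem~\ref{existence_of_gamma_cohesive_solution}. That condition says a unique, exponentially stable $\gamma$-cohesive synchronized state exists whenever $\lambda_2(L(x)) \geq \|B\|_2\|p\|_2/\sin(\gamma)$. It therefore suffices to show that the stated lower bound on every pairwise effective conductance forces this lower bound on $\lambda_2$. Equivalently, we need an upper bound on $1/\lambda_2$ in terms of $\max_{i\neq j} r_{ij}^{\text{eff}}$.

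The key step is a spectral inequality relating $\lambda_2$ to the effective resistances. We write $r_{ij}^{\text{eff}} = (e_i-e_j)^{\intercal}L^{\dagger}(e_i-e_j)$ and use the spectral decomposition of $L^{\dagger}$, as in the proof of Lemma~\ref{r_ij_equality_pinvL_invL}. The largest eigenvalue of $L^{\dagger}$ on $\mathbf{1}^{\perp}$ is $1/\lambda_2$, with unit eigenvector $v_2$. We aim for the bound
\begin{align*}
\frac{1}{\lambda_2} = v_2^{\intercal}L^{\dagger}v_2 \leq \frac{1}{2}\max_{i\neq j} r_{ij}^{\text{eff}}.
\end{align*}
To obtain it, we expand the unit vector $v_2 \in \mathbf{1}^{\perp}$ as a combination of differences $e_i - e_j$. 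The identity
\begin{align*}
v = \frac{1}{n}\sum_{i,j} v_i (e_i - e_j)
\end{align*}
holds for any $v\perp\mathbf{1}$, or one can use the Gram matrix identity $L^{\dagger} = -\tfrac{1}{2}\Pi_n R\,\Pi_n$, where $R$ is the effective resistance matrix. With the latter,
\begin{align*}
v^{\intercal}L^{\dagger}v = -\tfrac12 v^{\intercal}Rv = -\tfrac12\sum_{i,j} v_i v_j r_{ij}^{\text{eff}}.
\end{align*}
We then bound this quadratic form using $\sum_i v_i = 0$ and $\|v\|_2=1$. Splitting $v$ into its positive and negative parts, the terms with $v_iv_j<0$ contribute at most $r_{\max}\sum_{v_iv_j<0}|v_iv_j|$. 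The terms with $v_iv_j>0$ are nonpositive because $r^{\text{eff}}\geq 0$. Since $\sum_i v_i=0$, the positive and negative parts have equal mass $s$, so the cross sum equals $2s^2 \le \|v\|_1^2/2$. Optimizing this under $\|v\|_2=1$ by Cauchy--Schwarz is the delicate part.

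This step is the main obstacle, because the crude Cauchy--Schwarz estimate $\|v\|_1^2\le n$ gives a factor depending on $n$, not the clean $\tfrac12$. If it fails, I would try the sharper route of writing $1/\lambda_2 \le \tfrac12\max_{i\ne j} r^{\text{eff}}_{ij}$ via the Courant--Fischer characterization of the Laplacian. The idea is to test with the vector $L^{\dagger}(e_i-e_j)$ for the extremal pair, or to use the known bound $\lambda_2 \geq 2/(n\,r_{\max})$ from the literature on effective resistance \cite{ghosh_minimizing_2008}. In the latter case the constant in the corollary would absorb the dependence on $n$. I would check which constant is actually attainable and, if needed, cite the bound relating $\lambda_2$ and the resistance diameter.

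Once the inequality $1/\lambda_2 \leq \tfrac12 \max_{i\ne j} r_{ij}^{\text{eff}}$ is in hand, the hypothesis $1/r_{ij}^{\text{eff}} \geq \|B\|_2\|p\|_2/(2\sin\gamma)$ for all $i\neq j$ gives
\begin{align*}
\tfrac12\max_{i\ne j} r_{ij}^{\text{eff}} \leq \sin(\gamma)/(\|B\|_2\|p\|_2),
\end{align*}
hence $\lambda_2 \geq \|B\|_2\|p\|_2/\sin\gamma$. This is exactly (\ref{eqn:synchronization_condition_2}). Applying the sufficient condition from the proof of Theorem~\ref{existence_of_gamma_cohesive_solution} on $\mathcal{G}(x)$ then yields existence and uniqueness of the $\gamma$-cohesive synchronized solution $(\delta^*,\mathbf{0})$.
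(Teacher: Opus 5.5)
\textbf{The key inequality is false.} Your argument rests on $1/\lambda_2\le\tfrac12 r_{\max}$, with $r_{\max}=\max_{i\neq j}r_{ij}^{\text{eff}}$, and no Courant--Fischer refinement can deliver it. For the unit-weight path on four nodes, $\lambda_2=2-\sqrt{2}$, so $1/\lambda_2\approx 1.71>3/2=\tfrac12 r_{14}^{\text{eff}}$. For the unit path on $n$ nodes, $1/\lambda_2\sim n^2/\pi^2$ while $\tfrac12 r_{\max}=(n-1)/2$. Your positive/negative splitting of $-\tfrac12 v^{\intercal}Rv$ is the right computation, and carried through it gives $1/\lambda_2\le s^2 r_{\max}\le\tfrac{n}{4}r_{\max}$. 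This bound is sharp: two very stiff clusters of $n/2$ nodes joined by a single edge of weight $\epsilon$ have $1/\lambda_2\to n/(4\epsilon)$ and $r_{\max}\to 1/\epsilon$. So the factor $n$ cannot be removed, and absorbing it into the constant proves a weaker corollary (with $\tfrac{n}{4\sin\gamma}$ in place of $\tfrac{1}{2\sin\gamma}$), not the stated one. The paper's own proof does not supply the missing direction either. It uses only the easy bound $r_{ij}^{\text{eff}}\le 2/\lambda_2$ and chains $1/r_{ij}^{\text{eff}}\ge\lambda_2/2\ge\|B\|_2\|p\|_2/(2\sin\gamma)$. That shows the $\lambda_2$ condition implies the conductance bound, i.e.\ the converse. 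You were right that sufficiency needs the reverse inequality; it simply does not hold.

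\textbf{The corollary as written is false, so nothing can close the gap.} Take the path $1$--$2$--$3$--$4$ with $a_{12}=a_{34}=K$, $a_{23}=\epsilon$ and $p=c\,(1,1,-1,-1)^{\intercal}$. Then $\|p\|_2=2c$, $\|B\|_2=\sqrt{2+\sqrt{2}}<2$, and $r_{\max}=r_{14}^{\text{eff}}=1/\epsilon+2/K$. Pick $\epsilon$ strictly between $\sqrt{2+\sqrt{2}}\,c/\sin\gamma$ and $2c/\sin\gamma$, then take $K$ large; the hypothesis then holds for every pair. But on a tree the line flows are forced by $p$. Power balance at node $2$ gives $\epsilon\sin(\delta_2^*-\delta_3^*)=p_1+p_2=2c$, so $|\sin(\delta_2^*-\delta_3^*)|=2c/\epsilon>\sin\gamma$, and any equilibrium has $|\delta_2^*-\delta_3^*|>\gamma$. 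On a longer path with one weak middle edge, the forced value of that sine can be made equal to $\tfrac{\sqrt{n}}{2}\sin\gamma$, so for large $n$ no synchronized state exists at all.

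\textbf{A true statement of this shape.} It needs a different right-hand side. Write $p$ as $\|p\|_1/2$ units of dipoles $e_k-e_l$ and use the maximum-principle bound $|(e_i-e_j)^{\intercal}L^{\dagger}(e_k-e_l)|\le r_{ij}^{\text{eff}}$. This gives $\|L^{\dagger}p\|_{\hat{\mathcal{E}},\infty}\le\tfrac12\|p\|_1\max_{\{i,j\}\in\hat{\mathcal{E}}}r_{ij}^{\text{eff}}$. Hence $1/r_{ij}^{\text{eff}}\ge\|p\|_1/(2\sin\gamma)$ is sufficient under the incremental-norm criterion the paper builds on. So is the stated condition with $\|B\|_2$ replaced by $\sqrt{n}$.
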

\begin{proof}
    Using the definition of effective resistance, 
    \begin{align*}
        r_{ij}^{\text{eff}}(\mathcal{G}(x)) &= (e_i - e_j)^{\intercal}L(x)^{\dagger}(e_i - e_j)\\
        &= (e_i - e_j)^{\intercal}V \Lambda^{\dagger} V^{\intercal}(e_i - e_j),\\
        &= y^{\intercal} \Lambda^{\dagger}y,
    \end{align*}
    where $y = V^{\intercal}(e_i - e_j)$. Taking norms and using the consistency of the vector and operator $2$-norm,
    \begin{align}
        \|r_{ij}^{\text{eff}}(\mathcal{G}(x))\|_2 = \|y^{\intercal} \Lambda^{\dagger}y\|_2 \leq \|y^{\intercal}\|_2\, \|\Lambda^{\dagger}\|_2 \, \|y\|_2 = \frac{2}{\lambda_2},
        \label{eqn:r_ij_lambda_2_inequality}
    \end{align}
    using the unitary invariance of the $2$-norm, since for all $i \neq j$, $y = \|V^{\intercal}(e_i - e_j)\|_2 = \|(e_i - e_j)\|_2 = \sqrt{2}$. Combining inequalities (\ref{eqn:r_ij_lambda_2_inequality}) and (\ref{eqn:synchronization_condition_2}), it follows that
    \begin{align*}
        \frac{1}{r_{ij}^{\text{eff}}(\mathcal{G}(x))} \geq \frac{\lambda_2(L(x))}{2} \geq \frac{1}{2 \sin(\gamma)}\|B\|_2\|p\|_2 \quad \forall i\neq j.
    \end{align*}
    is sufficient for $\gamma$-cohesiveness.
\end{proof}
Since $r_{ij}^{\text{eff}} = d(i, j)^2$ from Section~\ref{subsec:kron_reduction_and_effective_resistance}, the squared euclidean distance between any pair of nodes $i,j \in \mathcal{V}$ on the network $\mathcal{G}(x)$ is bounded above by $2/\lambda_2({\mathcal{G}(x))}$.

\section{Results}
\label{sec:results}
\subsection{Minimizing Transients}
\label{subsec:minimizing_transients}

In this section, we validate the framework developed in Section~\ref{sec:methodology} on the IEEE 30-bus test system and show that the problem formulated using the linearized model (\ref{eqn:linearized_dynamics}) results in significant improvements in the dynamics of the nonlinear differential-algebraic system (\ref{eqn:generator_model}) - (\ref{eqn:load_model}). The network data is available in MATPOWER \cite{zimmerman_matpower_2011}, the edge weights of the graphs $\mathcal{G}(x)$ in this section are the susceptances of the transmission lines, and the generator nodes are $\mathcal{V}_G = \{1, 2, 13, 22, 23, 27\}$. We model the problem using CVXPY \cite{diamond_cvxpy_2016} and solve with MOSEK \cite{aps_mosek_2025}. For the dynamic simulations in this section, we set the parameters of all generators to be $m = 1$ and $d = 1$; this choice is arbitrary, and all simulations are initialized at $\delta(t_0) =  \mathbf{0}$, $\dot{\delta}(t_0) = \mathbf{0}$.

We begin by comparing the synchronization dynamics for different strategies of allocating a budget $\alpha$. We consider the following methods:
\begin{enumerate}
    \item Proportional: $x_{e} = w_e \alpha/(\mathbf{1}^{\intercal}W \mathbf{1})$ for each $e \in \mathcal{E}$,
    \item Uniform: $x_e = \alpha/|\mathcal{E}|$ for each $e \in \mathcal{E}$,
    \item Random Uniform: $x_e = v_e\alpha$, $v \sim \text{Dirichlet}(1, 1, \ldots, 1)$,
    \item Optimal: $x_e = \arg \min P_1$.
\end{enumerate}

\begin{figure*}[htpb]
    \centering

    \begin{minipage}[t]{0.49\textwidth}
        \centering
        \includegraphics[width=\linewidth]{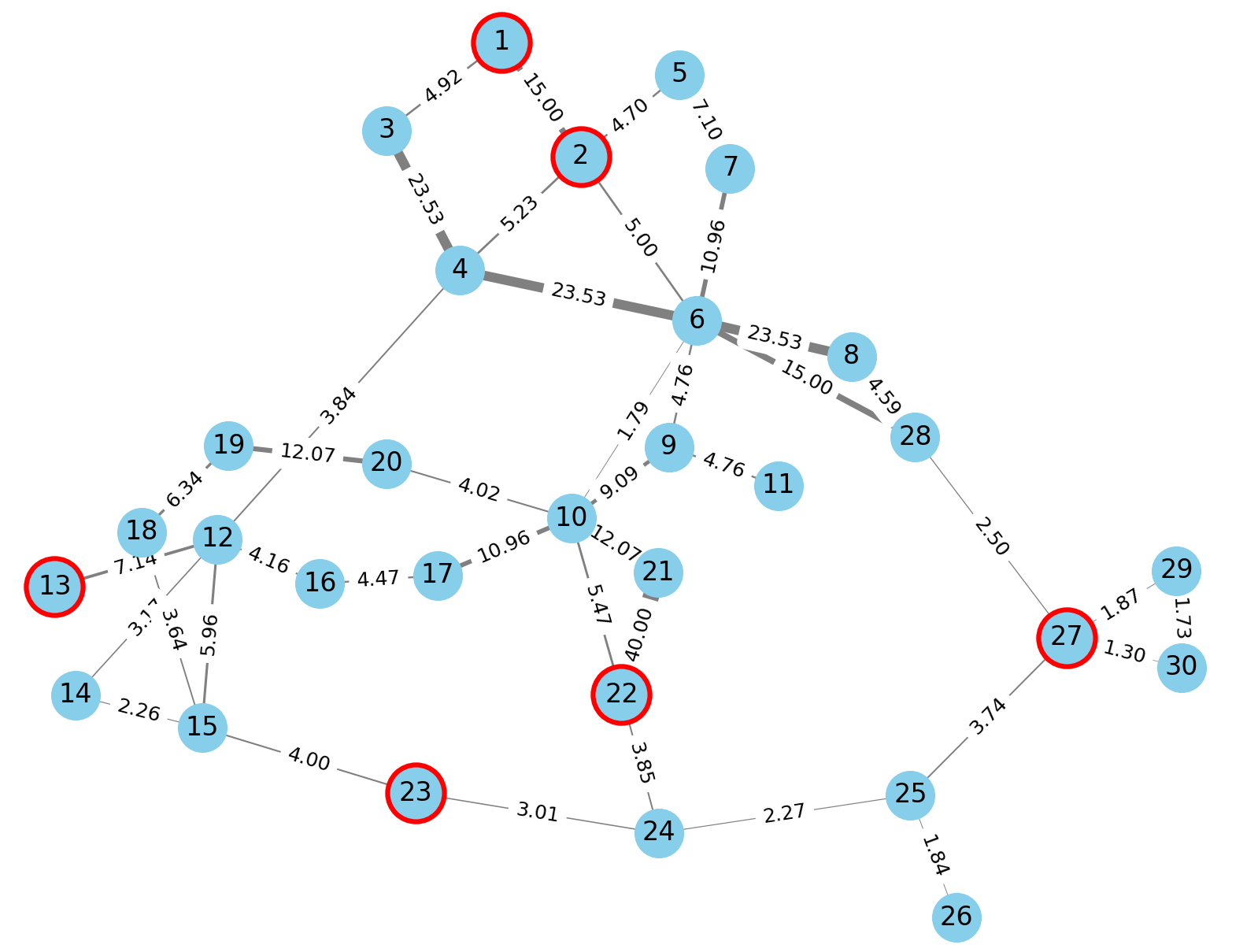}
        
        (a)
    \end{minipage}
    \hfill
    \begin{minipage}[t]{0.49\textwidth}
        \centering
        \includegraphics[width=\linewidth]{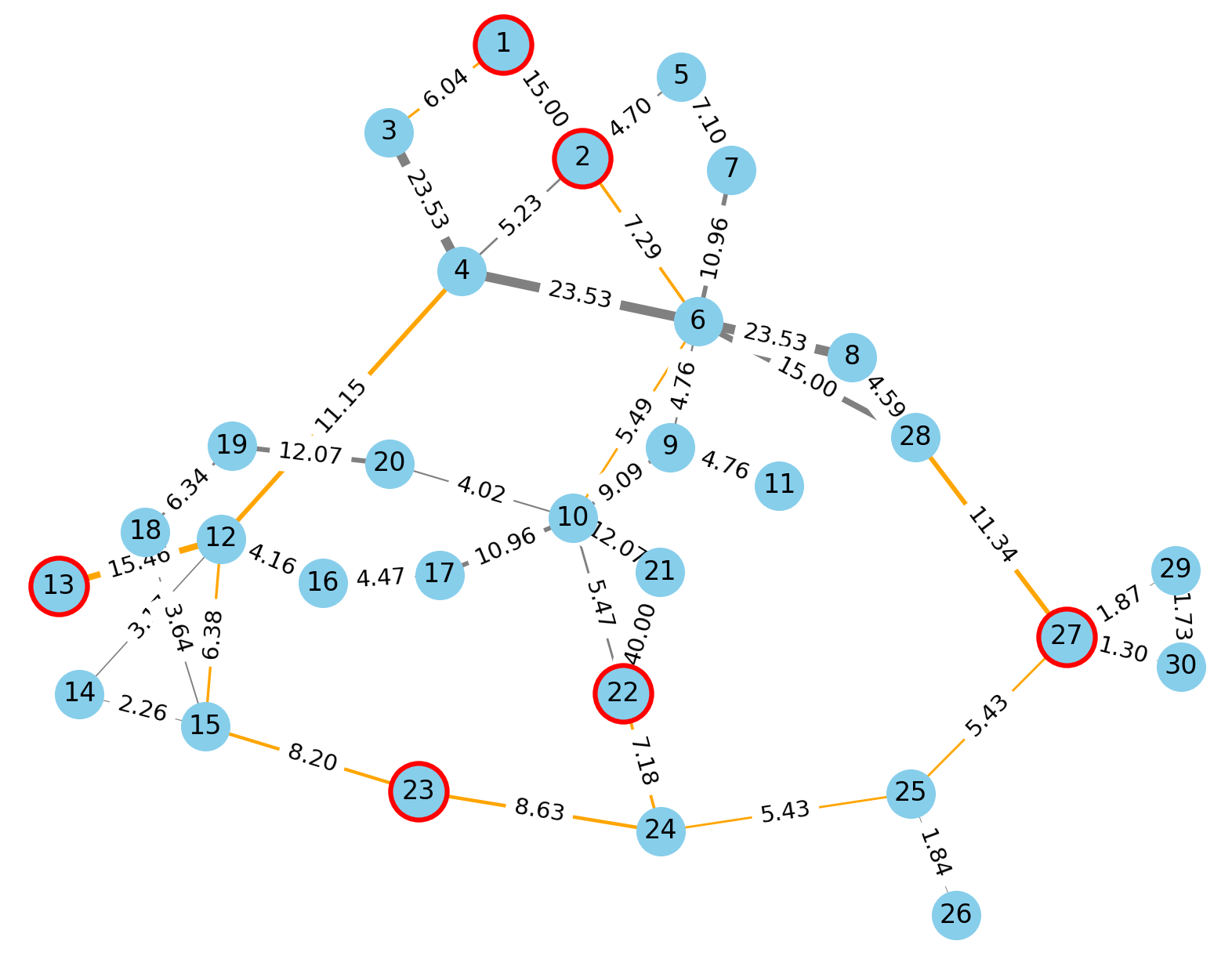}
        (b)
    \end{minipage}

    \caption{Comparing the edge weights of the IEEE 30-bus network before and after optimization. (a) Original network, $R_{\text{tot}}(\mathcal{G}_{\text{red}}) = 5.46$. (b) Network after optimization for $\alpha = 50$ p.u., $R_{\text{tot}}(\mathcal{G}_{\text{red}}(x^{\star})) = 2.91$.}
    \label{fig:comparing_ieee_30bus_pre_and_post_optimization}
\end{figure*}

\begin{figure}[!t]
    \centering

    \begin{minipage}[t]{0.32\linewidth}
        \centering
        \includegraphics[width=\linewidth]{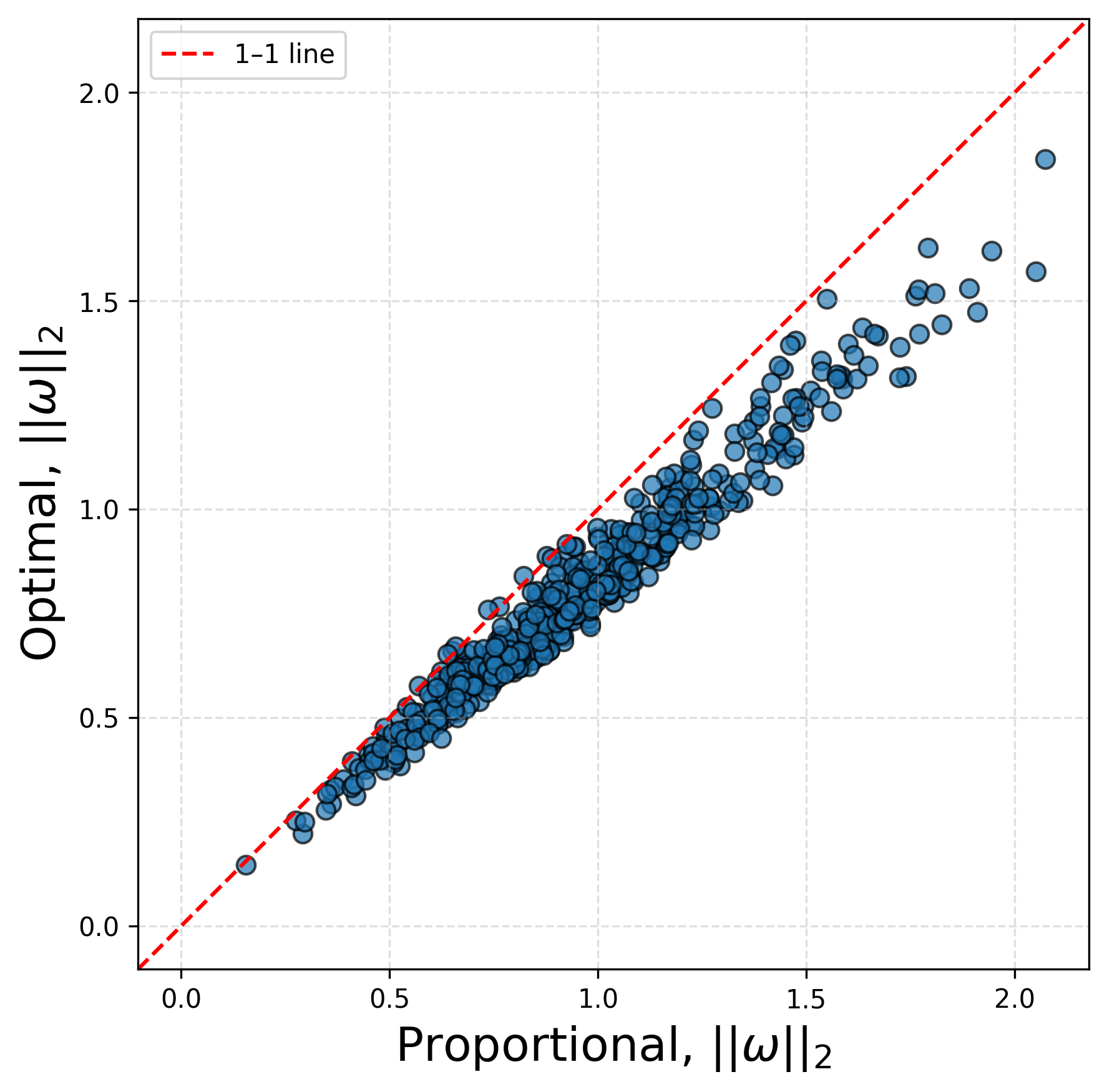}
        
        (a)
        \label{fig:optimal_vs_proportional}
    \end{minipage}
    \hfill
    \begin{minipage}[t]{0.32\linewidth}
        \centering
        \includegraphics[width=\linewidth]{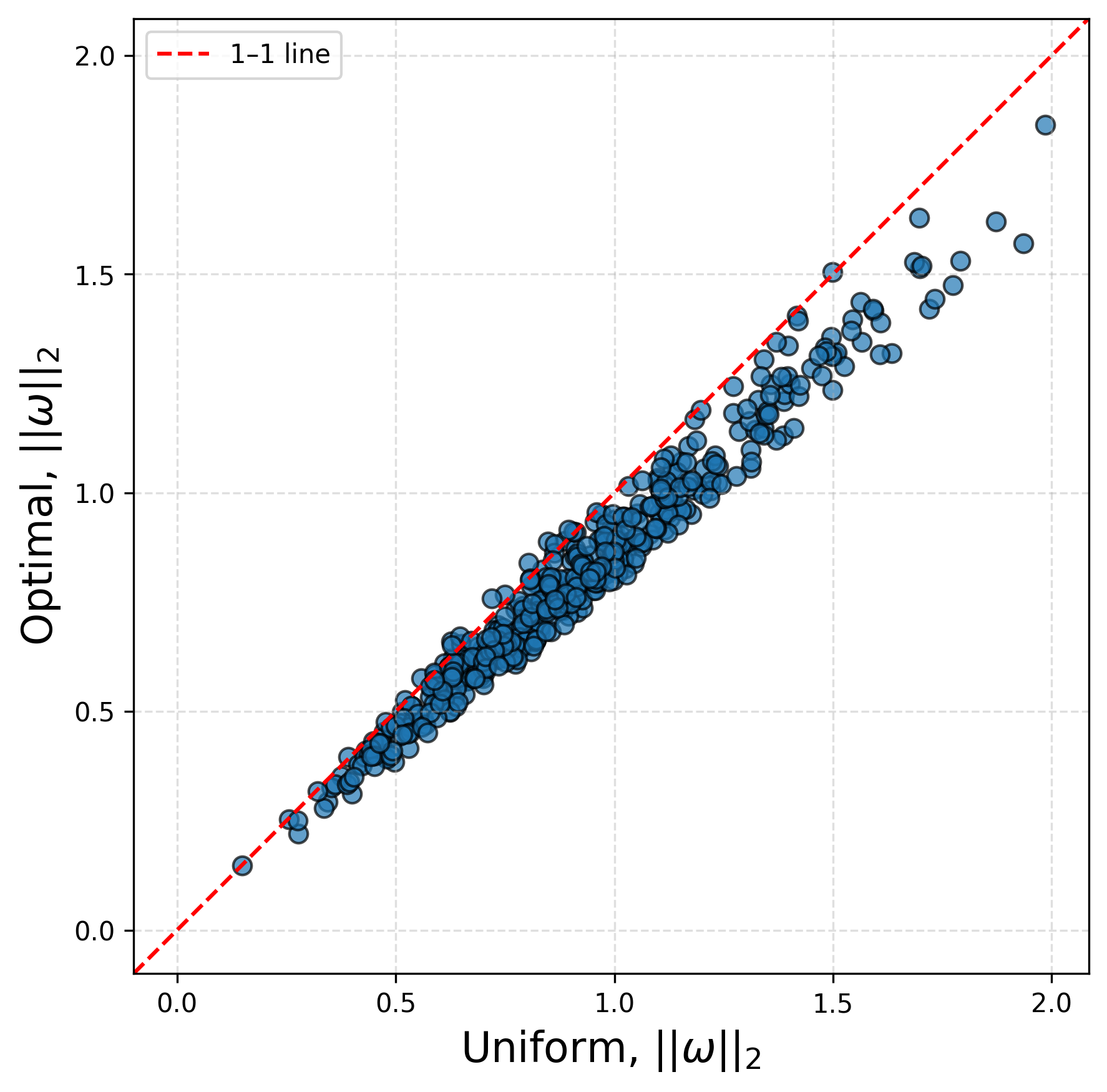}
        
        (b)
        \label{fig:optimal_vs_uniform}
    \end{minipage}
    \hfill
    \begin{minipage}[t]{0.32\linewidth}
        \centering
        \includegraphics[width=\linewidth]{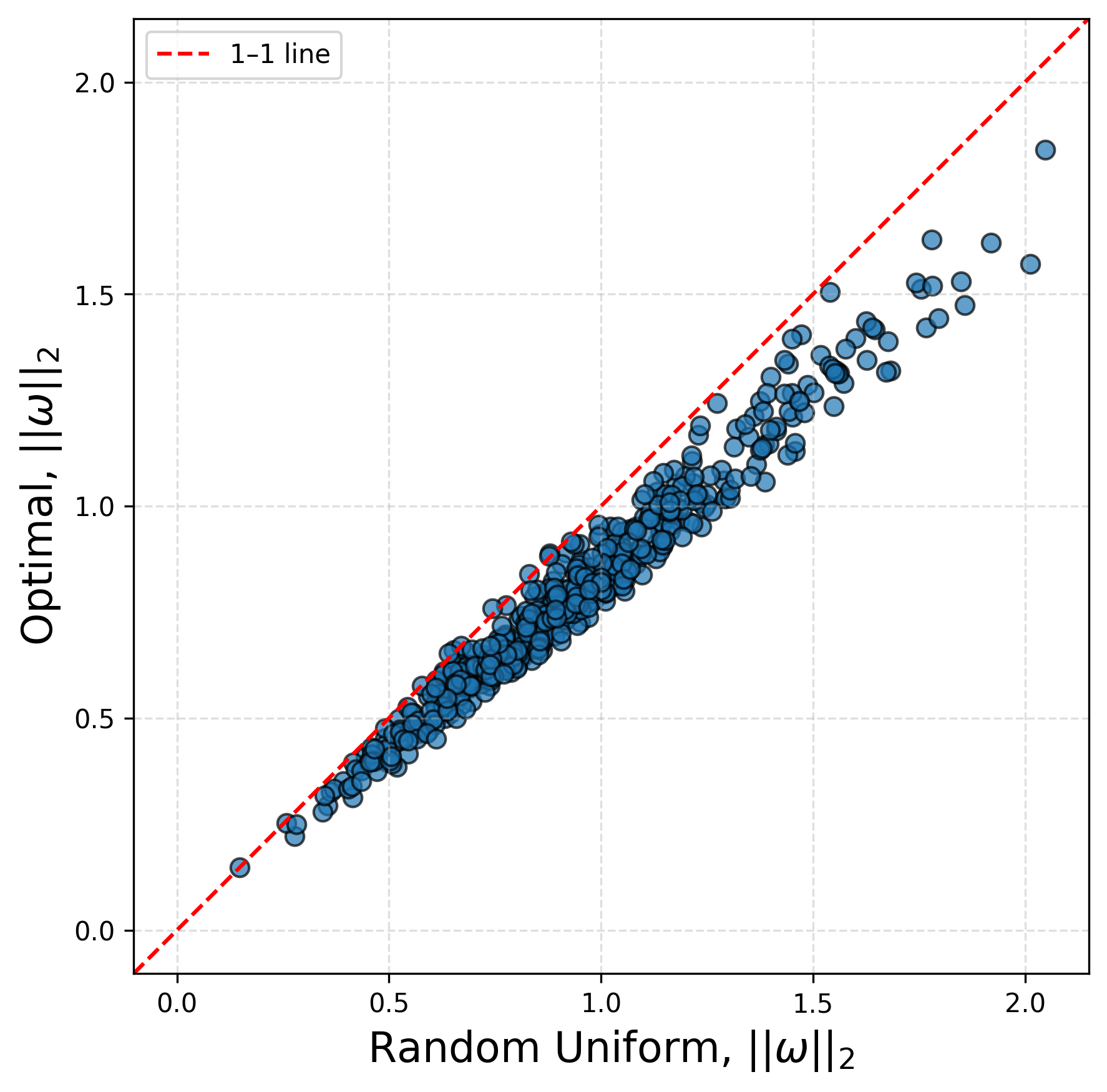}
        
        (c)
        \label{fig:optimal_vs_random_uniform}
    \end{minipage}

    \caption{Comparing $\|\omega\|_2$ at generator nodes for $\alpha = 50$ allocated using strategies (1) - (3) vs. the optimal method (4) for $500$ samples of exogenous inputs $u_0 \sim \mathcal{N}(0, I)$. The Optimal strategy outperforms (a) Proportional by an average of $15.79\%$. (b) Uniform by an average of $11.39\%$. (c) Random Uniform by an average of $14.64\%$.}
    \label{fig:comparing_optimal_to_all_strategies_for_alpha_50}
\end{figure}
To reduce the size of the graph $\mathcal{G}_{\text{red}}$ from $5.46$ to approximately $4.0$ for this network requires $\alpha \approx 15$ p.u using the optimal strategy, compared to $\alpha \approx 60$ p.u., $\alpha \approx 90$ p.u., and $\alpha > 100$ p.u. for the Uniform, Random Uniform, and Proportional strategies, respectively. We note that the results obtained using our method modifies only a subset of edges, in contrast to the other methods that modify all the branches of the network.

Considering a specific solution $x^{\star} = \arg\min P_1$ for $\alpha = 50$ p.u., the corresponding solution on the graph is shown in Fig.~\ref{fig:comparing_ieee_30bus_pre_and_post_optimization}. Comparing the original network Fig.~\ref{fig:comparing_ieee_30bus_pre_and_post_optimization}(a) and the optimized network Fig.~\ref{fig:comparing_ieee_30bus_pre_and_post_optimization}(b), we observe that most of $\alpha$ is allocated on the weakest links on paths connecting generator nodes, with a significant portion of $\alpha$ allocated to the leaf edge $\{12, 13\}$ connecting the lone generator node $13$ to the rest of the network. Also, as expected, branches connecting load nodes that do not lie on paths between generators (e.g., $\{5,7\}$, $\{25,26\}$, and $\{27,29\}$) are not modified. We simulate the DAE for the oscillator network defined on the graphs $\mathcal{G}(x)$ with edge weights modified according to methods (1) - (4) for $\alpha = 50$ p.u. Fig.~\ref{fig:comparing_optimal_to_all_strategies_for_alpha_50} compares the norms of the angular frequencies at the generator nodes for $500$ samples of exogenous inputs $u_0 \sim \mathcal{N}(0, I)$. Fig.~\ref{fig:comparing_optimal_to_all_strategies_for_alpha_50} shows that the optimal method results in a consistently better dynamic performance overall, especially for disturbances that result in larger transients. We remark that this improved dynamic performance is achieved by modifying the edge weights of $13$ of the $41$ edges on the graph $\mathcal{G}$ by allocating $\alpha = 50$ p.u. via the solution of $P_0$; all other aspects of the network remain identical. Simulations on the optimal graph $\mathcal{G}(x^{\star})$ also showed improved steady-state phase angle cohesion, even in the absence of an explicit phase cohesiveness constraint. This is consistent with the bound from Corollary~\ref{r_ij_synchronization_condition} using $r_{ij}^{\text{eff}}$ in Section~\ref{subsec:gamma_cohesiveness}.

\subsection{Minimizing Transients With Phase Angle Constraint}
\label{subsec:minimizing_transients_with_phase_angle_constraint}

Considering the problem of minimizing transients with an explicit synchronization constraint to ensure steady-state phase cohesion, Fig.~\ref{fig:ieee_30_optimized_with_gamma_constraint} shows the solution $x^{\star}$ to the problem $P_2$ for this network with parameters $\alpha = 50$ p.u., $\psi = 0.45$ p.u., and $\gamma = \pi/4$ rad. The optimal value for this problem is $\mathcal{G}_{\text{red}}(R_{\text{tot}}(x^{\star})) = 3.511$ compared to $\mathcal{G}_{\text{red}}(R_{\text{tot}}(x^{\star})) = 2.91$ obtained from solving $P_1$ (without the phase angle constraint).

Similar to Fig.~\ref{fig:comparing_ieee_30bus_pre_and_post_optimization}(b), the solution allocates $\alpha$ on the weakest links on the network. Fig.~\ref{fig:ieee_30_optimized_with_gamma_constraint} shows that to improve phase cohesion (via the constraint on $\lambda_2$), it is necessary to target the overall weakest links on the network $\mathcal{G}$ even if these links do not result in any improvements in the connectivity of $\mathcal{G}_{\text{red}}$; for instance, the highlighted connection to the leaf node $\{26\}$, and edges $\{27, 29\}$, $\{27, 30\}$. Consequently, the resulting solution trades off a degree of optimality relative to the solution from $P_1$, in order to improve steady-state phase angle cohesion. This observation is consistent with the simple path graph example in Fig.~\ref{fig:simple_path_objective_function_and_feasible_sets}(b).

\begin{figure}[h!]
    \centering
    \includegraphics[width=1.0\linewidth]{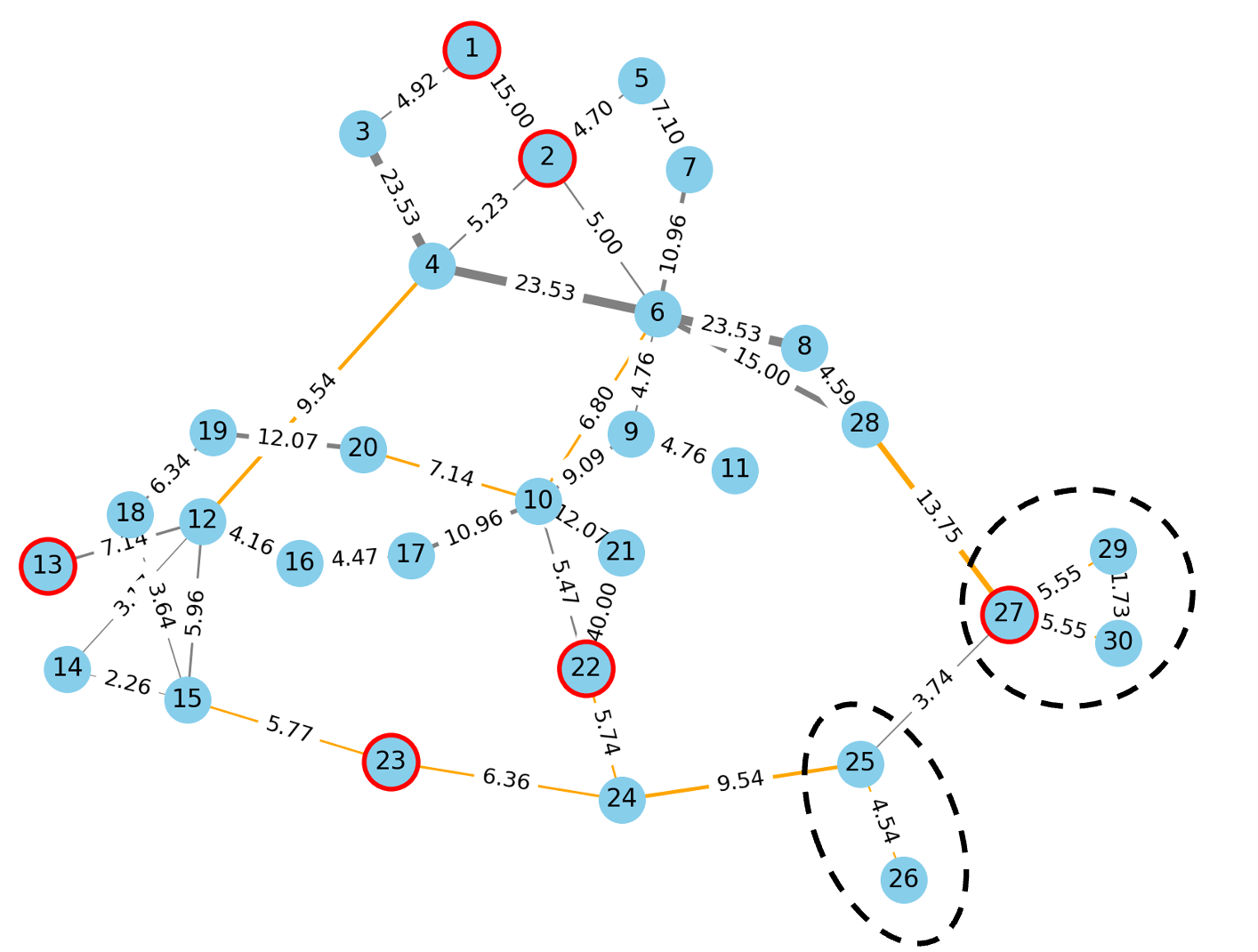}
    \caption{Optimal solution for $\alpha = 50$ with $\gamma$-cohesiveness constraint for $\gamma = \pi/4$ rad and $\psi = 0.45$ p.u. Optimal value, $R_{\text{tot}}(\mathcal{G}_{\text{red}}(x^{\star})) = 3.511$.}
    \label{fig:ieee_30_optimized_with_gamma_constraint}
\end{figure}

\subsection{Optimally Rewired Network}
\begin{figure}[!t]
    \centering
    \begin{minipage}[b]{0.49\textwidth}
        \centering
        \includegraphics[width=\linewidth]{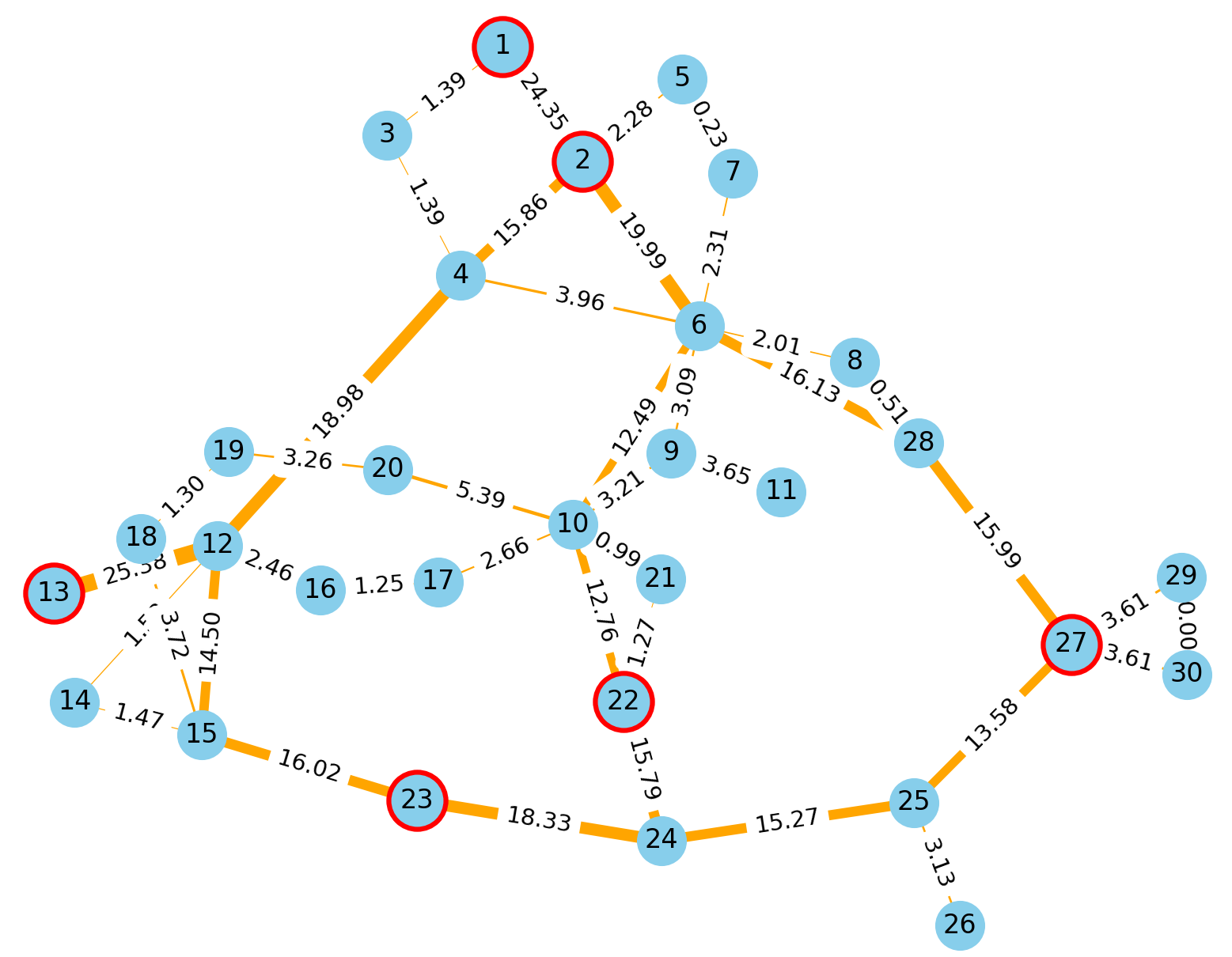}
        
        (a)
        
    \end{minipage}
    \hfill
    \begin{minipage}[b]{0.49\textwidth}
        \centering

        \begin{minipage}[t]{0.7\linewidth}
            \centering
            \includegraphics[width=\linewidth]{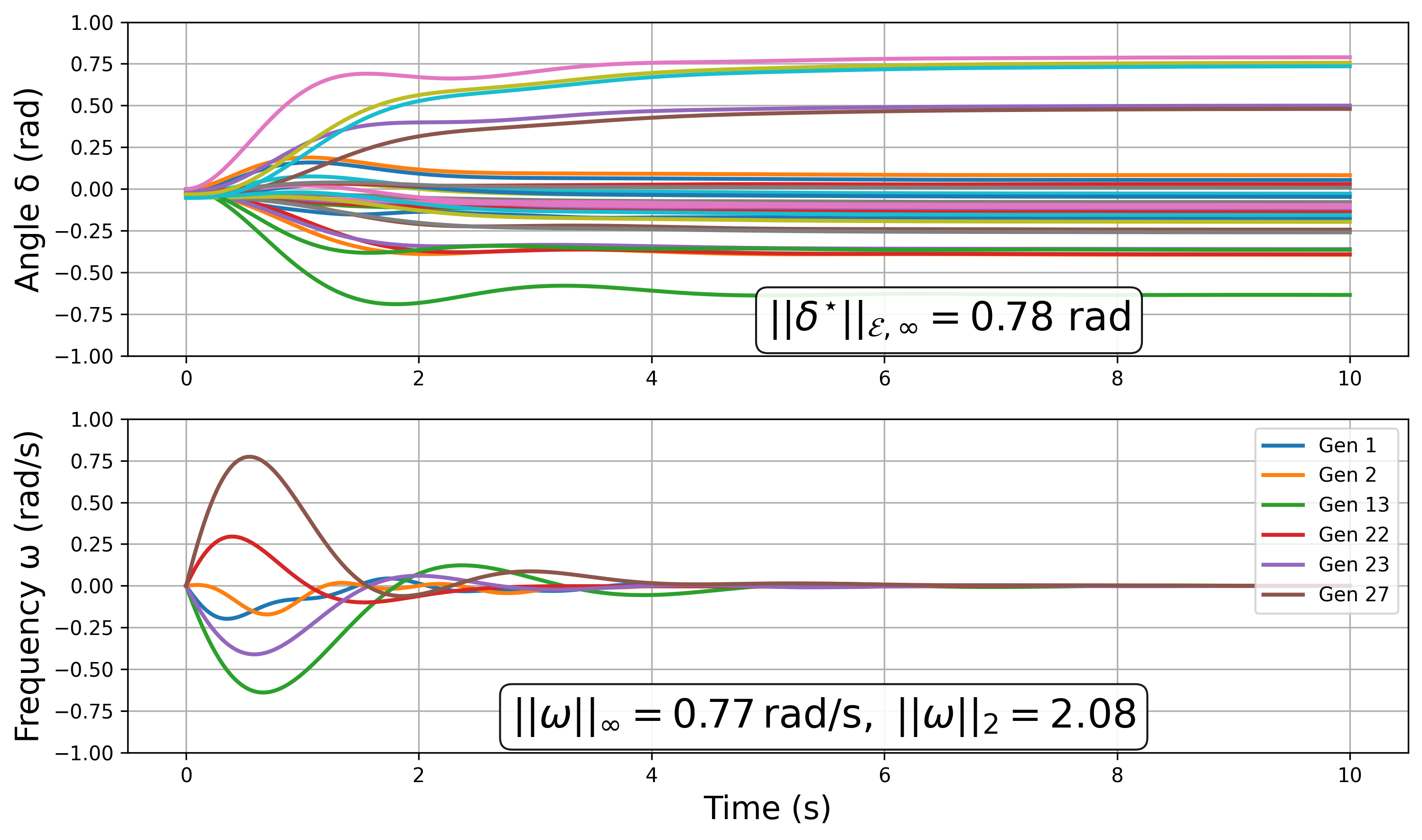}
            
            (b)
            
        \end{minipage}


        \begin{minipage}[t]{0.7\linewidth}
            \centering
            \includegraphics[width=\linewidth]{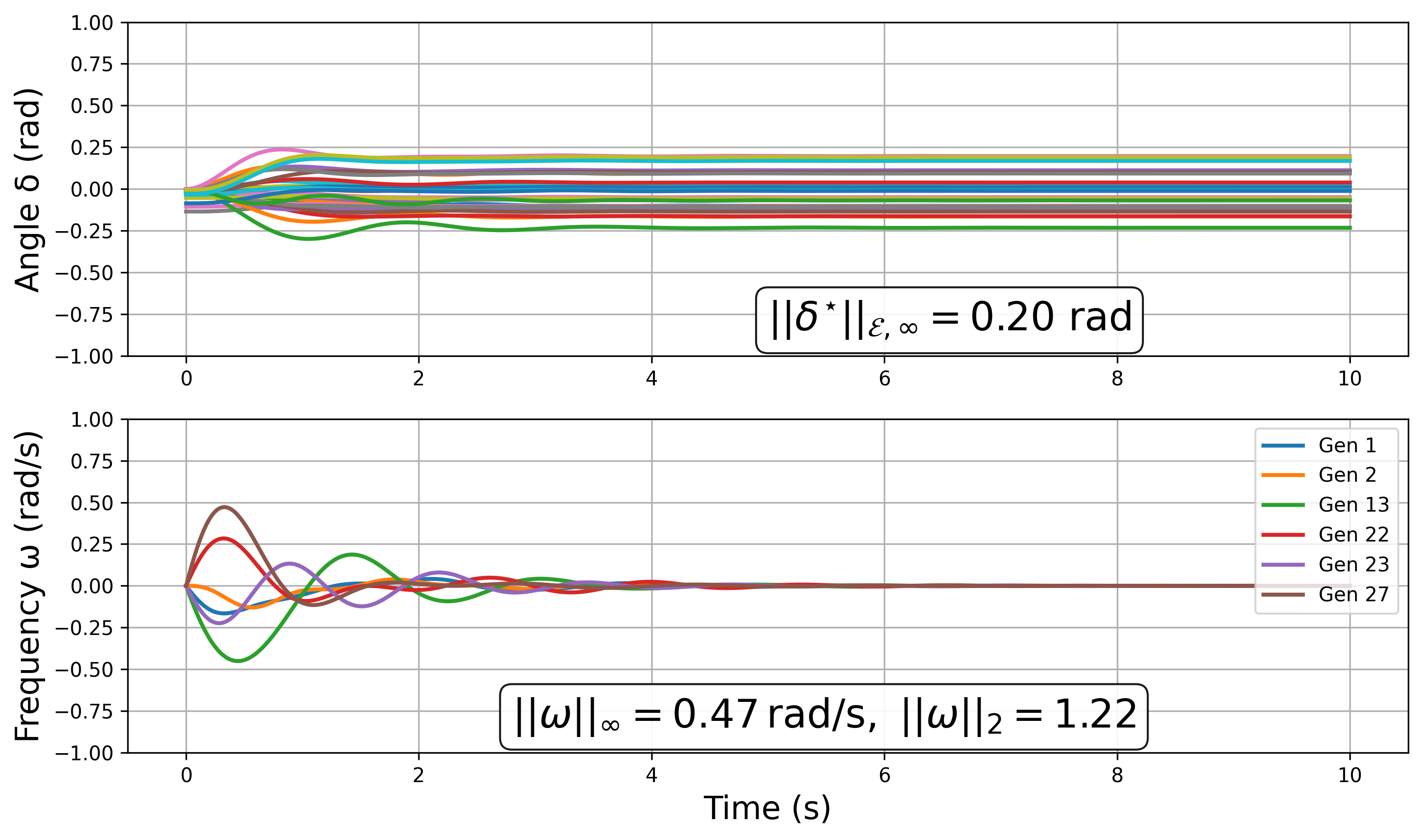}
            
            (c)
            
        \end{minipage}
    \label{fig:optimal_rewiring}

    \end{minipage}

    \caption{Optimally rewired network for parameters $\gamma = \pi/4$ rad and $\psi=0.45$ p.u., $R_{\text{tot}}(\mathcal{G}_{\text{red}}(x^{\star})) = 1.8954$ (a) Optimal graph.
    (b) Phase and frequency trajectories on the original network (Fig. 8(a)) for some realization of $u_0$.
    (c) Phase and frequency trajectories on optimally rewired network for the same $u_0$.}
    \label{fig:optimal_rewiring}
\end{figure}

Relaxing the non-negativity constraint on the design variable $x$ (that is, relaxing $X \succeq 0$) in the problem $P_2$ and setting $\alpha = 0$, we obtain the graph shown in Fig.~\ref{fig:optimal_rewiring}(a). The solution $x^{\star}$ to this problem reallocates the total edge weights $\mathbf{1}^{\intercal}W\mathbf{1}$ on the original graph $\mathcal{G}$ along the edges $\mathcal{E}$ to minimize the total effective resistance of the Kron-reduced graph subject to a phase angle constraint with parameters $\gamma = \pi/4$ rad $\psi = 0.45$ p.u. We note that, for power networks, there are considerations beyond synchronization that often necessitate retaining specific edges with prescribed edge weights in the network. In such cases, these edges should be removed from $\mathcal{E}_c$. However, if improving dynamic performance were the sole objective, Fig.~\ref{fig:optimal_rewiring} provides useful insights and could guide the design of oscillator networks. We remark that for this network reallocating edge weights also allows us to guarantee phase angle cohesiveness for much larger sets of net power injections $\mathcal{P}_{\psi}$.

Comparing phase angle trajectories in Fig.~\ref{fig:optimal_rewiring}(b) and Fig.~\ref{fig:optimal_rewiring}(c) for some realization of $u_0$ highlights the improved steady-state phase cohesion for the optimized network. For the same set of disturbances from Section~\ref{subsec:minimizing_transients}, Fig.~\ref{fig:comparing_original_network_to_optimally_rewired} shows that the rewired network exhibits significantly improved transient response. Overall, there's an average reduction of $36.19\%$ in $\|\omega\|_2$ with a reduction in the maximum incremental steady-state phase angles $\|\delta^{*}\|_{\mathcal{E}, \infty}$ from $0.78$ rad for the original network in Fig.~\ref{fig:comparing_ieee_30bus_pre_and_post_optimization}(a) to $0.20$ rad for Fig.~\ref{fig:optimal_rewiring}(a) across all realizations of $u_0$.

\begin{figure}[htbp]
    \centering
    \includegraphics[width=0.4\linewidth]{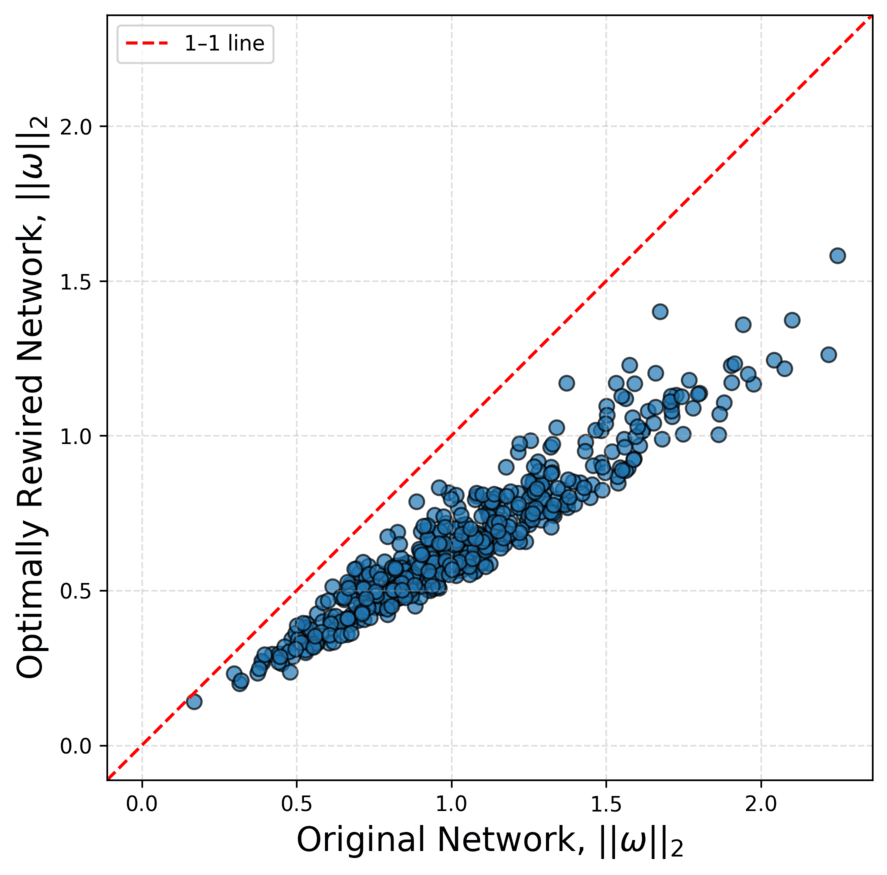}
    \caption{Comparing $\|\omega\|_2$ on the original network to the network with optimally reallocated total edge weights for the same samples of $u_0 \sim \mathcal{N}(0, I)$. The average reduction in frequency derivations is $36.19\%$.}
    \label{fig:comparing_original_network_to_optimally_rewired}
\end{figure}

\begin{figure}[t!]
    \centering
    \includegraphics[width=0.8\linewidth]{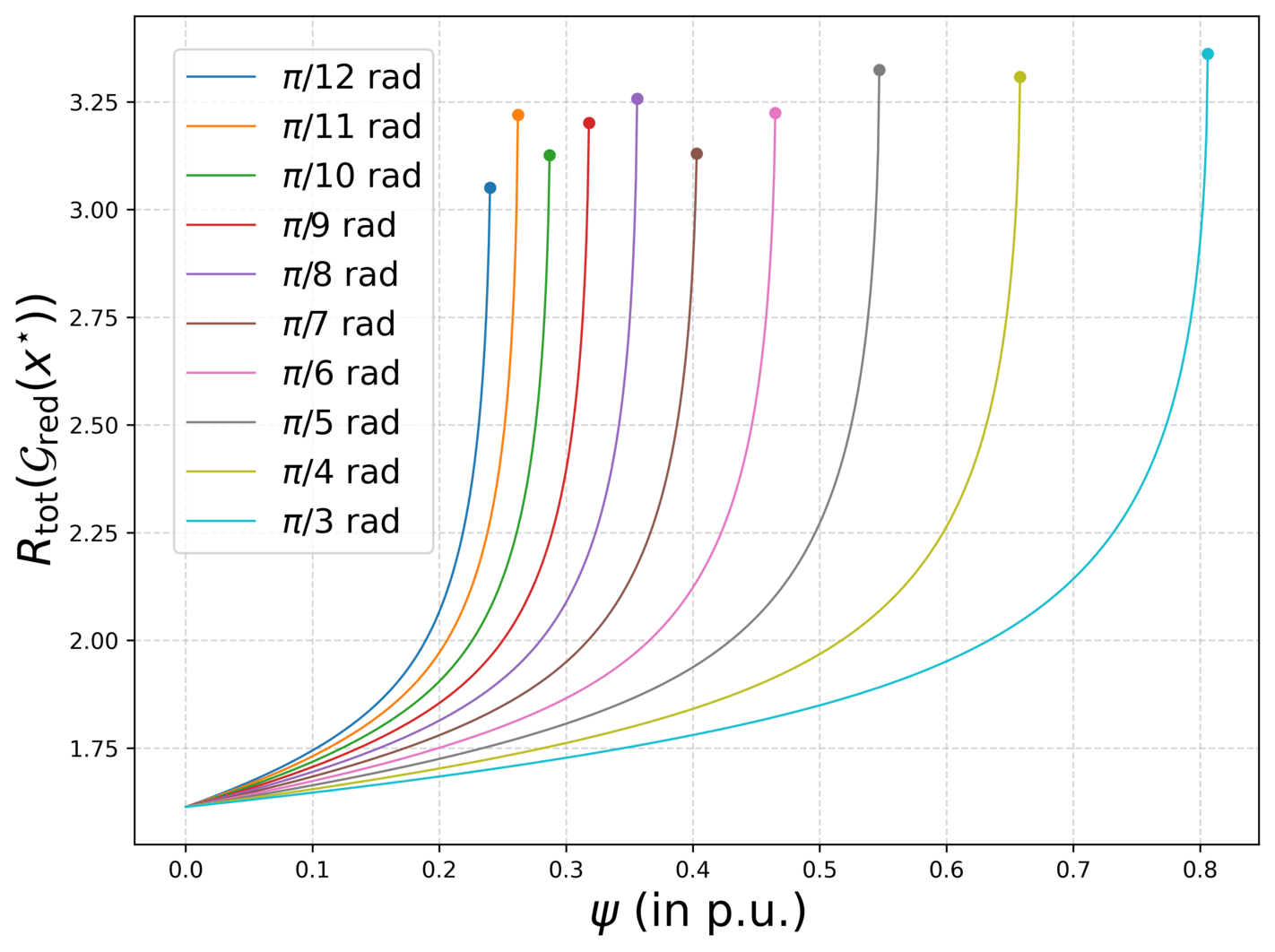}
    \caption{Optimal values vs. $\psi$ vs. $\gamma$ for the optimally rewired network (i.e. $\alpha =  0$ and relaxing $X \succeq 0$).}
    \label{fig:optimalval_vs_psi_vs_gamma}
\end{figure}

The relationship between the optimal value of the rewired network, $\psi$, and $\gamma$ for the optimal rewiring problem is illustrated in Fig.~\ref{fig:optimalval_vs_psi_vs_gamma}. For each fixed arc length $\gamma$, we increase the parameter $\psi$ from $0$ to the largest feasible value in step sizes of $10^{-3}$. Considering the choice of $\gamma  =\pi/4$ rad, we see the LMI is feasible up to $\psi = 0.658$ p.u., with an optimal value of $3.31$. From Fig.~\ref{fig:optimalval_vs_psi_vs_gamma}, the feasibility of the LMI (\ref{eqn:LMI_1}) for a given value of $\psi$ and $\gamma$ can be determined by tracing a vertical line from the $\psi$-axis; if the line does not intersect with the curve corresponding to $\gamma$, then the LMI (\ref{eqn:LMI_1}) is infeasible for that combination of $\psi$ and $\gamma$ as well as for all smaller values of $\gamma$.

\section{Conclusion}
In this work, we have shown that for a lossless power network of identical generators, the expected value of the $\mathcal{L}_2$ norm of the angular frequency dynamics near a frequency-synchronized solution can be decomposed into a transient component that is a function of the effective resistance of the underlying graph and a system-wide component that evolves in the nullspace of the reduced Laplacian (in $s$-domain), which is completely independent of the network parameters. This decomposition reveals the extent to which we can improve synchronization by modifying the network and highlights the primary factors influencing network transients. Most importantly, since the reduced Laplacian is always singular, there is a limit to the improvement in transient response that can be achieved by modifying the network.

We showed that under typical assumptions on the power network, the problem of minimizing the transient component of the frequency response at generator nodes by modifying the susceptances
of the network is a convex problem and we provide semidefinite programs for solving it. We further provide a sufficient condition, in terms of the algebraic connectivity, for the existence and uniqueness of a synchronized solution satisfying a specified steady-state phase angle requirement. A convex constraint, in the form of a linear matrix inequality, is also derived, whose feasibility guarantees that these requirements are met for a set of net power injections. Geometric interpretations of the solutions to the semidefinite programs are provided. We validate our proposed method on the IEEE 30-bus test system. Results for randomly generated exogenous inputs show that significant improvements in transient response can be achieved by optimally modifying the susceptances of the transmission network. Our proposed framework enables a systematic design of power networks as well as targeted interventions to improve synchronization.

\appendices



\section{Closed-form Expression For $\mathbb{E}_{u_0}[\|\widetilde{\omega}\|_2^2]$}
\label{Appendix_1}
For a step disturbance $u(t)$ defined as
\begin{align*}
    u(t) = \begin{cases}
        0, & t < 0,\\
        u_{0}\in \mathbb{R}^k, & t\geq 0,
    \end{cases}
\end{align*}
and $u_0(s) = u_0/s$. The corresponding angular frequency in Laplace domain with $u_0(s)$ as input,
\begin{align*}
    \omega(s) &= V H(s)V^{\intercal}\, \frac{u_0}{s}, \quad \text{where } \tilde{H}(s) = \frac{H(s)}{s}\\
    &= \begin{bmatrix} \frac{1}{\sqrt{k}}\mathbf{1} & V_{\perp}\end{bmatrix} \begin{bmatrix}
        \tilde{h}_1(s) & \\& \tilde{H}(s)_{\perp}
    \end{bmatrix}\begin{bmatrix}
        \frac{1}{\sqrt{k}} \mathbf{1}^{\intercal}\\ \\ V_{\perp}^{\intercal}
    \end{bmatrix}u_0\\
    &= \frac{1}{k} \tilde{h}_1(s)\mathbf{1}\mathbf{1}^{\intercal}u_0 + V_{\perp} \tilde{H}(s)_{\perp} V_{\perp}^{\intercal}u_0\\
    & \text{where } \frac{1}{k}\mathbf{1}^{\intercal}u_{0} = \frac{1}{k}\sum_{i}u_{0i} = \overline{u}_0 \in \mathbb{R}\\
    \omega(s) &= \underbrace{\tilde{h}_1(s) \overline{u}_0}_{\overline{\omega}(s)}\mathbf{1} + \underbrace{V_{\perp} \tilde{H}(s)_{\perp} V_{\perp}^{\intercal}u_0}_{\widetilde{\omega}(s)},\\
    &\text{where } \tilde{h}_1(s) = \frac{1}{s^2 m + sd}, \quad \text{since } \lambda_1 = 0
\end{align*}
$u_0 \in \mathbb{R}^{k}$ is a vector and $k$ is the number of generator nodes, $V_{\perp} \in \mathbb{R}^{k \times (k - 1)}$ is the matrix whose columns are the eigenvectors orthonormal to $\frac{1}{\sqrt{k}}\mathbf{1}$, and the matrix $\tilde{H}(s)_{\perp} = \text{diag}(\{\tilde{h}_i(s)\}_{i=2}^k) \in \mathbb{R}^{(k-1) \times (k-1)}$. In time domain the angular frequency can be similarly decomposed as,
\begin{align*}
    \omega(t) = \overline{\omega}(t) \mathbf{1} + \widetilde{\omega}(t).
\end{align*}
The $\mathcal{L}_2$ norm of the transient term
\begin{align*}
    \|\widetilde{\omega}\|_2^2 &= \int_{0}^{\infty}|\widetilde{\omega}(t)|^2 dt = \int_{0}^{\infty} \widetilde{\omega}(t)^{\intercal} \widetilde{\omega}(t) dt\\
    &= \int_{0}^{\infty}(V_{\perp} \tilde{H}(t)_{\perp} V_{\perp}^{\intercal}u_0)^{\intercal}(V_{\perp} \tilde{H}(t)_{\perp} V_{\perp}^{\intercal}u_0)dt\\
    &= \int_{0}^{\infty} u_0^{\intercal}V_{\perp} \tilde{H}(t)_{\perp}^2 V_{\perp}^{\intercal}u_0 dt\\
    & \text{define }z_0 = V_{\perp}^{\intercal}u_0 \in \mathbb{R}^{k-1}\\
    &= \int_{0}^{\infty} z_0^{\intercal} \tilde{H}(t)_{\perp}^2 z_0dt = \int_{0}^{\infty} \sum_{i = 1}^{k-1} z_{0i}^2 \tilde{h}_{i+1}(t)^2 dt\\
    &= \sum_{i = 1}^{k-1} z_{0i}^2 \int_{0}^{\infty} \tilde{h}_{i+1}(t)^2 dt = \sum_{i = 1}^{k-1} z_{0i}^2 \|\tilde{h}_{i + 1}\|_2^2
\end{align*}
where $\|\tilde{h}_i\|_2^2 = \frac{1}{2d \lambda_i}$ for $i = 2, \ldots, k$, by (28) of \cite{paganini_global_2020}.
\begin{align*}
    \|\widetilde{\omega}\|_2^2 &= \sum_{i = 1}^{k - 1} z_{0i}^2 \frac{1}{2d \lambda_{i + 1}}
    = \frac{1}{2d}\sum_{i = 1}^{k-1} \frac{(v_{i+1}^{\intercal}u_0)^2}{\lambda_{i+1}} = \frac{1}{2d}\sum_{i = 2}^{k} \frac{(v_{i}^{\intercal}u_0)^2}{\lambda_{i}}
\end{align*}
For $u_0 \sim \mathcal{N}(0, \sigma^2I)$, the expected value of the $\mathcal{L}_2$ norm of $\widetilde{\omega}$ has a closed-form expression, and we derive it as follows
\begin{align*}
    \mathbb{E}_{u_0}[\|\widetilde{\omega}\|_2^2] 
    &= \frac{1}{2d} \mathbb{E}_{u_0}\left[\frac{(v_2^{\intercal}u_0)^2}{\lambda_2} + \cdots + \frac{(v_k^{\intercal}u_0)^2}{\lambda_k}\right]\\
    &= \frac{1}{2d} \left(\frac{1}{\lambda_2} \mathbb{E}_{u_0}\left[(v_2^{\intercal}u_0)^2\right] + \cdots + \frac{1}{\lambda_k} \mathbb{E}_{u_0}\left[(v_k^{\intercal}u_0)^2\right] \right)\\
    &= \frac{1}{2d} \sum_{i=2}^k \frac{1}{\lambda_i} \mathbb{E}_{u_0}\left[(v_i^{\intercal}u_0)^2\right]
\end{align*}
Since $u_0 \sim \mathcal{N}(0, \sigma^2I)$, and $V \in \mathbb{R}^{k \times k}$ is orthonormal, this implies that $V^{\intercal}u_{0} \sim \mathcal{N}(0, \sigma^2I)$ and $\mathbb{E}_{u_0}\left[(v_i^{\intercal}u_0)^2\right] = \left({\mathbb{E}_{u_0}[v_i^{\intercal}u_0]}\right)^2 + \text{var}(v_i^{\intercal}u_0) = \sigma^2$ for each $i$.
So,
\begin{align*}
    \mathbb{E}_{u_0}[\|\widetilde{\omega}\|_2^2] &= \frac{\sigma^2}{2d} \sum_{i=2}^k \frac{1}{\lambda_i} = \frac{\sigma^2}{2d}\text{trace}\,\left(L_{\text{red}}^{\dagger}\right) = \frac{\sigma^2}{2d} \frac{1}{k} R_{\text{tot}} (\mathcal{G}_{\text{red}}),
\end{align*}
where $R_{\text{tot}}(\mathcal{G}_{\text{red}})$ is the total effective resistance of the Kron-reduced graph $\mathcal{G}_{\text{red}}$, $k$ is the number of generator nodes on the network, and $d$ is the damping coefficients of the oscillators.

\section{Proof of Lemma 2}
\label{appendix:invariance_of_r_ij_proof}
Partitioning the node voltages and current injections as
\begin{align*}
    v = \begin{bmatrix} v_{\mathcal{V}_G}^{\intercal} & v_{\overline{\mathcal{V}}_G}^{\intercal} \end{bmatrix}^{\intercal},
    \qquad
    J = \begin{bmatrix} J_{\mathcal{V}_G}^{\intercal} & J_{\overline{\mathcal{V}}_G}^{\intercal} \end{bmatrix}^{\intercal},
\end{align*}
and let $e_i$ denote the $i$th standard basis vector in $\mathbb{R}^k$. For a unit current injection at node $i \in \mathcal V_G$ and withdrawal at node $j \in \mathcal V_G$, the current injection vector
\begin{align*}
    J = \begin{bmatrix} (e_i - e_j)^{\intercal} & 0^{\intercal} \end{bmatrix}^{\intercal}.
\end{align*}
By the definition of effective resistance,
\begin{align*}
    r_{ij}^{\mathrm{eff}}(\mathcal G)
= v_i - v_j
= (e_i - e_j)^\intercal v_{\mathcal V_G}.
\end{align*}

The network equations are
\begin{align*}
    \begin{bmatrix}
    J_{\mathcal V_G} \\
    J_{\overline{\mathcal V}_G}
    \end{bmatrix}
    =
    \begin{bmatrix}
    L_{\mathcal V_G\mathcal V_G}
    &
    L_{\mathcal V_G\overline{\mathcal V}_G}
    \\
    L_{\overline{\mathcal V}_G\mathcal V_G}
    &
    L_{\overline{\mathcal V}_G\overline{\mathcal V}_G}
    \end{bmatrix}
    \begin{bmatrix}
    v_{\mathcal V_G} \\
    v_{\overline{\mathcal V}_G}
    \end{bmatrix}.
\end{align*}
Since $J_{\overline{\mathcal V}_G}=\mathbf 0$, eliminating
$v_{\overline{\mathcal V}_G}$
\begin{align*}
    J_{\mathcal V_G}
    =
    \Big(
    L_{\mathcal V_G\mathcal V_G}
    -
    L_{\mathcal V_G\overline{\mathcal V}_G}
    L_{\overline{\mathcal V}_G\overline{\mathcal V}_G}^{-1}
    L_{\overline{\mathcal V}_G\mathcal V_G}
    \Big)
    v_{\mathcal V_G}
    =
    L_{\mathrm{red}}v_{\mathcal V_G},
\end{align*}
where $L_{\overline{\mathcal V}_G\overline{\mathcal V}_G}$ is nonsingular for a connected graph,
\begin{align*}
    v_{\mathcal V_G}
    =
    L_{\mathrm{red}}^\dagger(e_i - e_j).
\end{align*}
Therefore,
\begin{align*}
    r_{ij}^{\mathrm{eff}}(\mathcal G)
    =
    (e_i - e_j)^{\intercal}
    L_{\mathrm{red}}^\dagger
    (e_i - e_j)
    =
    r_{ij}^{\mathrm{eff}}(\mathcal G_{\mathrm{red}}) \quad \forall i,j \in \mathcal{V}_G.
\end{align*}

\section{Sufficient Condition for the Existence and Uniqueness of a $\gamma$-cohesive Synchronized Solution}
\label{appendix:sufficient_sync_condition}
We build on the synchronization results in \cite{dorfler_synchronization_2013}, where the authors establish that for a vector of net power injections $p \in \mathbf{1}^{\perp}$, a unique and stable $\gamma$-cohesive solution $(\delta^*, \mathbf{0})$ exists for the power network equation (\ref{eqn:generator_model}) - (\ref{eqn:load_model}) on the graph $\mathcal{G}(x)$ if
\begin{align*}
    \|L(x)^{\dagger}p\|_{\hat{\mathcal{E}}, \infty} \leq \sin(\gamma),
\end{align*}
where the incremental $\infty$-norm is defined for a vector $x$ as $\|x\|_{\hat{\mathcal{E}}, \infty} := \max_{\{i, j\} \in \hat{\mathcal{E}}} |x_i - x_j|$ and $L(x)$ is the Laplacian of $\mathcal{G}(x)$. For the edge set $\hat{\mathcal{E}} = \mathcal{E} \cup \mathcal{E}_c$ with incidence matrix $\hat{B}$, the norm can be written in terms of the incidence matrix as $\|\hat{B}^{\intercal} x\|_{\infty}$. For any $p \in \mathbf{1}^{\perp}$ and $\gamma \in (0, \pi/2)$, assuming the eigenvalues of $L(x)$ are ordered as $0 = \lambda_1 <\lambda_2 \leq \ldots \leq \lambda_n$, then
\begin{align*}
    \|L(x)^{\dagger}p\|_{\hat{\mathcal{E}}, \infty} &= \|\hat{B}^{\intercal}L(x)^{\dagger}p\|_{\infty}\\
    &= \|\hat{B}^{\intercal}V \text{diag}(0, 1/\lambda_2, \ldots, 1/\lambda_n) V^{\intercal} p\|_{\infty}\\
    &= \frac{1}{\lambda_2}\|\hat{B}^{\intercal}V \text{diag}(0, 1, \lambda_2/\lambda_3 \ldots, \lambda_2/\lambda_n) V^{\intercal} p\|_{\infty}\\  
    &\leq \frac{1}{\lambda_2}\|\hat{B}^{\intercal}V \text{diag}(0, 1, \lambda_2/\lambda_3 \ldots, \lambda_2/\lambda_n) V^{\intercal} p\|_2\\
    &\text{using the consistency of the $2$-norm,}\\
    &\leq \frac{1}{\lambda_2}\|\hat{B}\|_2 \|V\text{diag}(0, 1, \lambda_2/\lambda_3 \ldots, \lambda_2/\lambda_n) V^{\intercal} p\|_2\\
    &\leq \frac{1}{\lambda_2} \|\hat{B}\|_2 \|V \text{diag}(0, 1, \ldots, 1) V^{\intercal} p\|_2\\
    &= \frac{1}{\lambda_2} \|\hat{B}\|_2 \|(I - \frac{1}{n}\mathbf{11}^{\intercal})p\|_2 = \frac{1}{\lambda_2} \|\hat{B}\|_2 \|p\|_{2}.
\end{align*}
It follows that a sufficient condition for $\gamma$-cohesiveness is,
\begin{align*}
    \|\hat{B}\|_2 \|p\|_2 \frac{1}{\sin(\gamma)} \leq \lambda_2,
\end{align*}
where the operator norm $\|\hat{B}\|_2 = \sqrt{\lambda_n(BB^{\intercal})} \leq \min \left\{ \sqrt{n}, 2d_{\text{max}}\right\}$.
\section{Proof of Proposition~\ref{C_is_projector}}
\label{appendix:c_is_projector}
$C$ is an orthogonal projector iff $C = C^{\intercal}$ and $C^2 = C$, that is, $C$ is symmetric and idempotent. $C$ is symmetric by definition. We show that it is idempotent as follows.
Recall that $E$ is orthonormal by definition, that is, $E^{\intercal}E = I$,
\begin{align*}
    C^2 &= \left(EE^{\intercal} - \frac{1}{k} \mathbf{1}_{\mathcal{V}_G}\mathbf{1}_{\mathcal{V}_G}^{\intercal}\right) \left(EE^{\intercal} - \frac{1}{k} \mathbf{1}_{\mathcal{V}_G}\mathbf{1}_{\mathcal{V}_G}^{\intercal}\right)\\
    &= EE^{\intercal} - \frac{2}{k}(EE^{\intercal} \mathbf{1}_{\mathcal{V}_G}) \mathbf{1}_{\mathcal{V}_G}^{\intercal} + \frac{k}{k^2}\mathbf{1}_{\mathcal{V}_G}\mathbf{1}_{\mathcal{V}_G}^{\intercal}\\
    &\text{Since } \mathbf{1}_{\mathcal{V}_G} \in \text{Span}(EE^{\intercal}) \text{ that is } EE^{\intercal} \mathbf{1}_{\mathcal{V}_G} = \mathbf{1}_{\mathcal{V}_G},\\
    &= EE^{\intercal} - \frac{1}{k}\mathbf{1}_{\mathcal{V}_G}\mathbf{1}_{\mathcal{V}_G}^{\intercal} = C.
\end{align*}
Hence $C$ is a projector, and we can verify that it's a projector onto the $(k-1)$-dimensional subspace defined as $\left\{x \;\left| \; \mathbf{1}_{\mathcal{V}_G}^{\intercal} x = 0,\right.\; x_j = 0 \;\forall j \notin \mathcal{V}_G\right\}$.



\bibliographystyle{IEEEtran}

\bibliography{zotero_references}
\vspace{-1em}

\begin{IEEEbiography}{Gerald Ogbonna}
is a PhD candidate in the Systems Engineering department at Cornell University. He received a B.Sc. in Electrical and Electronics Engineering from Federal University of Technology Owerri, Nigeria.

His research explores power network dynamics through the framework of coupled oscillator models, as part of the broader problem of the control of networked multi-agent systems.
\end{IEEEbiography}
\vspace{-1em}

\begin{IEEEbiography}{David Bindel} received BS degrees in mathematics and computer science from the University of Maryland in 1999, and a PhD in computer science from UC Berkeley in 2006. After three years as a Courant Instructor of mathematics at NYU, he joined the department of Computer Science at Cornell University, where he is currently a professor of Computer Science and the director of the Center for Applied Mathematics (CAM). He also serves as the director of the Simons Collaboration on Hidden Symmetries and Fusion Energy. His research focus is in applied numerical linear algebra and scientific computing, with applications to a variety of science and engineering problems. He is a fellow of the Society for Industrial and Applied Mathematics (SIAM) and the recipient of the Householder Prize in numerical linear algebra, a Sloan research fellowship, and best paper awards from the KDD and ASPLOS conferences and from the SIAM Activity Group on Linear Algebra.
\end{IEEEbiography}
\vspace{-1em}

\begin{IEEEbiography}{C. Lindsay Anderson}
 is Professor and Chair of Biological and Environmental Engineering at Cornell University, with research affiliations in the Center for Applied Mathematics, Systems Engineering, and Electrical and Computer Engineering. Previously, she served as the Kathy Dwyer Marble and Curt Marble Faculty Director at the Cornell Atkinson Center for Sustainability and as interim Director of the Cornell Energy Systems Institute.  Her research interests are the application of optimization under uncertainty to large-scale problems in sustainable energy systems. The National Science Foundation, US Department of Energy, US Department of Agriculture, PSERC, and the National Science and Engineering Research Council of Canada have supported her work. 
\end{IEEEbiography}

\end{document}